%% file: main.tex
\documentclass[11pt]{article}
\usepackage{fullpage}
\usepackage{graphicx} %

\usepackage{amsmath,amsthm,amssymb}
\usepackage{thmtools}
\usepackage[dvipsnames]{xcolor}
\usepackage[colorlinks,citecolor=blue,linkcolor=BrickRed]{hyperref}
\usepackage{fullpage}
\usepackage[algo2e,linesnumbered,ruled,vlined,algosection]{algorithm2e}
\usepackage[capitalize,noabbrev]{cleveref}
\usepackage{comment}
\SetKw{Continue}{continue}
\SetKw{Break}{break}
\SetKwInput{KwInput}{Input}
\SetKwInput{KwOutput}{Output}
\SetKwInput{KwParameter}{Parameter}

\newcommand{\defeq}{\stackrel{\mathrm{def}}{=}}
\newcommand{\diag}{\operatorname{diag}}
\newcommand{\R}{\mathbb{R}}
\newcommand{\argmin}{\operatorname*{argmin}}
\newcommand{\norm}[1]{\lVert#1\rVert}
\newcommand{\nnz}{\mathrm{nnz}}
\newcommand{\eps}{\varepsilon}

\newcommand{\poly}{\mathrm{poly}}

\declaretheorem[numberwithin=section,refname={Theorem,Theorems},Refname={Theorem,Theorems}]{theorem}

\declaretheorem[numberlike=theorem]{lemma}

\declaretheorem[numberlike=theorem]{claim}

\declaretheorem[numberlike=theorem,refname={Fact,Facts},Refname={Fact,Facts},name={Fact}]{fact}

\newcommand{\A}{\mathcal{A}}
\newcommand{\T}{\mathcal{T}}
\newcommand{\D}{\mathcal{D}}

\newcommand{\SDD}{\mathrm{SDD}}

\title{Near-Optimal Parallel $M$-Matrix Scaling}

\author{Chih-Hung Liu \\ National Taiwan University \\ \texttt{chliu@ntu.edu.tw} \and Aaron Sidford \\ Stanford University \\ \texttt{sidford@stanford.edu} \and Ta-Wei Tu \\ Stanford University \\  \texttt{taweitu@stanford.edu}}
\date{}

\begin{document}

\maketitle

\begin{abstract}
In this paper we consider the problem of computing a positive diagonal scaling of a $d \times d$ symmetric $M$-matrix so that all row and column sums are $1$. We provide a poly-logarithmic iteration algorithm that solves this problem to high accuracy, where the cost of each iteration is dominated by the cost of solving a Laplacian system. This improves upon the recent $\tilde{O}(d^{1/3})$ iteration bound of [Vladu STOC 2025] and yields a randomized algorithm for the problem that runs in near-linear time and poly-logarithmic parallel depth.

\end{abstract}

\input{intro}
\input{overview}
\input{scaling}

\input{ack}

\bibliographystyle{alpha}
\bibliography{reference}

\appendix

\input{flow}
\input{omitted_proof}

\end{document}

%% file: intro.tex
\section{Introduction}

We consider the fundamental \emph{symmetric matrix-scaling} problem where for an input symmetric matrix $A \in \R^{d \times d}$ the goal is to compute a diagonal matrix $X$ with positive diagonal entries such that $X A X$ has all row sums (and therefore column sums) equal to $1$. This problem encompasses the well-studied asymmetric variant of the problem where for an input matrix $B \in \R^{d \times d}$ the goal is to compute diagonals $L \in \R^{d \times d}$ and $R \in \R^{d \times d}$ with positive diagonal entries so that $L B R$ has row and column sums equal to $1$.
This is done simply by solving the symmetric matrix scaling problem for the anti-block-diagonal matrix with anti-diagonals $B$ and $B^\top$.

These matrix scaling problems are well studied and have numerous applications spanning many scientific disciplines~\cite{kruithof1937telefoonverkeersrekening,deming1940least,brown1959note,friedlander1961technique,herman1976iterative,raghavan1984pairs,rothblum1989scalings,LinialSW98,BalakrishnanHT04,idel2016review}, and notably it is known that linear programming can be reduced to them~\cite{KhachiyanK92}.
Matrix scaling is also sometimes used in practice to precondition linear systems, in hopes of improving the performance of the solvers~\cite{LivneG04,PockC11}.
There is a particularly long line of research on obtaining efficient algorithms for the special case where $A$ is \emph{nonnegative}~\cite{kalantari1996complexity,LinialSW98,BalakrishnanHT04,RoteZ07,CohenMTV17,Allen-ZhuLOW17}, including the famous RAS method of Sinkhorn~\cite{sinkhorn1964relationship} and improved analysis of it~\cite{KalantariK93,KalantariLRS08}. %

In this paper, we focus on a different special case of the symmetric matrix scaling problem where $A$ is what is known as an \emph{$M$-matrix}; a symmetric matrix is an $M$-matrix if it is \emph{positive definite}, i.e., $x^\top A x > 0$ for all nonzero $x \in \R^d$, and its off-diagonal entries are non-positive, i.e., $A_{ij} \leq 0$ for all $i \neq j$~\cite{BermanP94}.
Our goal is to shed light on the complexity of this problem and compute an $\eps$-approximate scaling (i.e., $x \in \R^d_{>0}$ such that $\|X A X \vec{1} - \vec{1}\|_{2} \leq \eps$) as efficiently as possible.
We measure efficiency in terms of both the sequential running time and the parallel depth, and focus on algorithms which solve the scaling problem to \emph{high accuracy} whose efficiency depends logarithmically on the accuracy parameter $\eps$.

This \emph{symmetric $M$-matrix scaling problem} is of particular interest due to recent progress of \cite{Vladu25}.
Excitingly, \cite{Vladu25} showed that this special case can be solved particularly efficiently using interior point methods (IPMs). %
More specifically, \cite{Vladu25} gave an $\tilde{O}(d^{1/3})$
iteration\footnote{Throughout the paper, we use $\tilde{O}(\cdot)$ to hide poly-logarithmic factors in the dimension $d$.} IPM that solved this problem to high accuracy.
Each iteration of the IPM involves solving a linear system in a Laplacian matrix which can be implemented in $\tilde{O}(\nnz(A))$ time and
$\tilde{O}(1)$ parallel depth. %

This result is especially notable as IPMs typically require a larger $\Omega(d^{1/2})$ iterations. There are faster algorithms which solve this problem in almost-linear time, e.g., through the convex-cost flow algorithm of \cite{ChenKLPGS22} (see \Cref{appendix:convex-flow}).
However, this algorithm is highly sequential and requires an almost-linear $\Omega(d)$ depth, as opposed to the $\tilde{O}(d^{1/3})$ depth of \cite{Vladu25}.

Consequently, the central question we consider in this paper is: \emph{are there even more efficient
algorithms for the symmetric $M$-matrix scaling problem, and can such algorithms be highly parallel?} 

\paragraph{Our Results.} In this paper we answer this question in the affirmative. We give an algorithm that not only improves on the almost-linear running time of \cite{ChenKLPGS22} but also runs in poly-logarithmic depth, improving upon \cite{Vladu25}.

\begin{theorem} 
  There is an algorithm that,
  given a symmetric $M$-matrix $A \in \R^{d \times d}$, computes an $x \in \R^d_{>0}$ such that $\|X A X \vec{1} - \vec{1}\|_2 \leq \eps$ with high probability in
  $\tilde{O}(\nnz(A) \cdot\log^3 \kappa(A) \log(\frac{1}{\eps})\log\log(\frac{1}{\eps}))$ time
  and
  $\tilde{O}(\log^3 \kappa(A) \log(\frac{1}{\eps})\log\log(\frac{1}{\eps}))$ parallel depth.
  \label{thm:main-randomized}
\end{theorem}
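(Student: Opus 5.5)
The plan is to view the scaling problem as minimizing the strictly convex function
\[
f(x) = \tfrac12 x^\top A x - \sum_{i} \log x_i
\]
over $x \in \R^d_{>0}$. Its gradient is $Ax - x^{-1}$, so a stationary point satisfies $XAX\vec 1 = \vec 1$. We will solve this with a homotopy (path-following) scheme: a damped Newton method whose steps each reduce to one Laplacian (SDD) solve.

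\textbf{Step 1 (reduction to SDD systems).} Since $A$ is a symmetric $M$-matrix, there is a positive vector $v$ with $Av > 0$. Hence $D_v A D_v$ is symmetric diagonally dominant, and every Hessian $A + X^{-2}$ is an $M$-matrix. We will find such a $v$, or an approximate one, within the allowed budget, for instance by the standard trick of solving $Av = \vec 1$ with an SDD-type solver after a preliminary rescaling. After this change of variables, each Newton system $(A + X^{-2})\delta = g$ becomes an SDD system. That system is solvable in $\tilde O(\nnz(A))$ time and $\tilde O(1)$ depth with high probability, which is where the randomization enters.

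\textbf{Step 2 (homotopy).} We introduce the family of problems
\[
f_t(x) = \tfrac12 x^\top (A + tI) x - \sum_i \log x_i .
\]
For large $t$ the minimizer is nearly $t^{-1/2}\vec 1$, which is easy to compute. We then decrease $t$ geometrically, by a constant factor per phase, down to $t \approx \eps/\poly(\kappa(A))$. The number of phases is $O(\log \kappa(A) + \log(1/\eps))$, and the $\log(1/\eps)$ part is absorbed into the final convergence phase. Within each phase we re-center using Newton steps.

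The key structural claim is that $f_t$ behaves like a self-concordant-type function in the local norm induced by $X^{-2}$. The Hessian satisfies $A + tI + X^{-2} \succeq X^{-2}$, and the log barrier gives a multiplicative control of $x$ of the form $\|X^{-1}\delta\|_\infty \le \|X^{-1}\delta\|_2$. Our reason to hope for polylogarithmically many Newton steps per phase, rather than $\sqrt d$, is to control progress in $\ell_\infty$. We will show that the minimizers $x^*(t)$ and $x^*(t/2)$ differ coordinatewise by only a $\poly(\kappa)$ multiplicative factor. A monotonicity argument for $M$-matrices should give this: $x^*(t)$ is coordinatewise monotone in $t$, since the implicit-function derivative involves $(A + tI + X^{-2})^{-1}$, which is entrywise nonnegative.

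Within a phase we would then use $O(\log\kappa)$ rounds of $\ell_\infty$-box-constrained Newton steps, each a "trust region" step in which each coordinate moves by at most a constant factor. The $M$-matrix structure is what should make each box-constrained step solvable by $O(\log\kappa)$ Laplacian solves. This accounts for the $\log^3\kappa$: phases, times rounds per phase, times solves per round.

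\textbf{Step 3 (final accuracy).} Near $t \approx 0$ we switch to plain Newton's method on $f$ for quadratic convergence to $\eps$. Each step uses an approximate solver, whose accuracy must be boosted, which costs the $\log(1/\eps)\log\log(1/\eps)$ factor. We then convert the small gradient into the bound $\|XAX\vec 1 - \vec 1\|_2 = \|X\nabla f(x)\|_2 \le \eps$.

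\textbf{Main obstacle.} The hardest step will be proving the $\ell_\infty$ (coordinatewise multiplicative) stability in Step 2, and showing that box-constrained Newton steps for $M$-matrix Hessians make constant-factor progress using only polylogarithmically many Laplacian solves. This is exactly what separates the result from the $\sqrt d$ (or $d^{1/3}$) $\ell_2$-based interior point analysis. I would first try to prove it via entrywise nonnegativity of inverse $M$-matrices together with monotonicity of the central path.
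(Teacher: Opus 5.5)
\textbf{The gap: the step that is supposed to beat $\sqrt d$ is never supplied.} The tools you point to do not supply it. The one concrete inequality you give, $\norm{X^{-1}\delta}_\infty \le \norm{X^{-1}\delta}_2$, only reproduces the usual $\ell_2$ local-norm analysis. Changing $t$ to $(1-\delta)t$ shifts the scaled gradient by $\delta t x^2$. Since $t x_i^2 \approx 1$ for large $t$, this shift has $\ell_2$-norm about $\delta\sqrt d$, which forces $\delta = O(1/\sqrt d)$. A $\poly(\kappa)$ multiplicative stability between $x^*(t)$ and $x^*(t/2)$, even if proved, concerns the exact path. It says nothing about convergence of Newton-type steps from an approximately centered iterate. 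The box-constrained Newton step is a quadratic program over a box, not a linear system. You give neither an oracle that solves it with $O(\log\kappa)$ Laplacian solves nor a convergence rate for the trust-region iteration. So your $\log^3\kappa$ count (phases $\times$ rounds $\times$ solves) rests on unproved claims, as you acknowledge under ``Main obstacle.''

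There are two further problems. First, Step~1 is circular: solving $Av=\vec{1}$ ``after a preliminary rescaling'' presupposes the SDD-making scaling you are trying to compute. One needs the $\alpha$-homotopy of \cite{AhmadinejadJSS19} on $A/s+\alpha I$, which uses $O(\log^2(d\kappa(A)))$ SDD solves; this is actually where the paper's $\log^3\kappa$ comes from. Second, the homotopy should stop at $t\approx\lambda_{\min}/\poly(d)$, independent of $\eps$, and then jump to $t=0$. Running it down to $t\approx\eps/\poly(\kappa)$, with $\log^2\kappa$ solves per phase each at accuracy $\eps/\poly$, would cost $\log^2(1/\eps)$.

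\textbf{The missing idea: a direct $\ell_\infty$ contraction for the \emph{unconstrained} Newton step.} This makes box constraints unnecessary. Your instinct that entrywise nonnegativity of $M$-matrix inverses is the key is right, but it should be used to bound an $\ell_\infty$ operator norm via row sums. For an $M$-matrix $M$ with $M\vec{1}\ge c\vec{1}$, one has $\norm{M^{-1}}_\infty\le 1/c$, since $M^{-1}\ge 0$ entrywise and its $\ell_\infty$-norm is the maximal row sum. With centrality $\Delta=\norm{X\nabla f(x)}_\infty$, the scaled Newton step is $X^{-1}h=M^{-1}X\nabla f(x)$ with $M=X\nabla^2 f(x)X$. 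The new centrality is then exactly $\norm{X^{-1}h}_\infty^2$. So everything reduces to lower-bounding the row sums of $M$ using only $\Delta$.

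The paper does this with the homotopy $f_\mu(x)=-\mu b^\top x+\frac{1}{2}x^\top Ax-\sum_i\ln x_i$, where $b=X_0^{-1}(X_0AX_0\vec{1}-\vec{1})\ge\vec{0}$ and $X_0AX_0\vec{1}\ge\vec{1}$ comes from the SDD scaling. Then $x_0$ is exactly central at $\mu=1$, and $(I+XAX)\vec{1}\ge(2-\Delta)\vec{1}$. One Newton step per halving of $\mu$ keeps $\Delta\le 1/2$, so $O(\log(d\kappa(A)))$ SDD solves suffice. This is followed by setting $\mu=0$ and taking $O(\log\log(1/\eps))$ Newton steps.

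Your $f_t$ has the opposite sign. Centrality only gives $\vec{1}+X(A+t'I)X\vec{1}\ge(2-\Delta)\vec{1}-(t-t')x^2$, so you additionally need an entrywise bound on $tx^2$. That bound is available if you run the homotopy on the SDD-rescaled matrix. At an index $i$ with $x_i=\max_j x_j$, one has $(Ax)_i\ge x_i(A\vec{1})_i\ge 0$, hence $tx_j^2\le tx_i^2\le 1+\Delta$ for all $j$. With that bound, $t\mapsto 0.9\,t$ followed by a single Newton step preserves $\Delta\le 0.1$. Without an argument of this kind, your outline does not prove the theorem.
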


Excitingly, we obtain this result by essentially carefully reducing the problem to Laplacian system solving.
We first compute a scaling that makes $A$ \emph{(symmetric) diagonally dominant (SDD)}, i.e., $A_{ii} \geq \sum_{j \neq i}|A_{ij}|$ for all $i \in [d]$; this is a well-studied subroutine for $M$-matrices, and such a scaling is known to be computable with $\tilde{O}(1)$ Laplacian system solves.
From this scaling, we then build upon \cite{Vladu25} to obtain an $\tilde{O}(1)$-iteration IPM where again, the cost of each iteration is dominated by the cost of solving a Laplacian system. 

Altogether this shows that symmetric $M$-matrix scaling is no harder than Laplacian system solving in both sequential running time and parallel depth by more than an $\tilde{O}(1)$ factor.
Our main technical result is thus the following.

\begin{theorem}[Informal; see \Cref{thm:alg-matrix-scaling}]
  There is an algorithm that,
  given a symmetric $M$-matrix $A \in \R^{d \times d}$, computes an $x \in \R^d_{>0}$ such that $\|X A X \vec{1} - \vec{1}\|_2 \leq \eps$ through a sequence of $\tilde{O}(\log^2\kappa(A) + \log\log(1/\eps))$ SDD linear system solves, each of which involves $O(\nnz(A))$ non-zero entries and is solved to an $\Omega(\eps/\poly(d,\kappa(A)))$ accuracy.
\end{theorem}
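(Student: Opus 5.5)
Our plan is in two phases, following the outline in the introduction: first reduce to the case where $A$ is SDD, then run a short IPM-style homotopy whose steps are each one SDD solve.

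\textbf{Phase 1 (making $A$ SDD).} Since $A$ is a symmetric $M$-matrix, $A^{-1}$ is entrywise nonnegative. Hence for any positive $b$, the vector $y = A^{-1}b$ is positive and satisfies $Ay = b > 0$. With $Y = \diag(y)$, the matrix $YAY$ has row sums $Y A y = Yb > 0$, so $YAY$ is SDD.

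Computing $y$ is itself an $M$-matrix solve, which is not yet available to us. We would therefore use the standard doubling/regularization scheme. Start from $A + \lambda I$ with $\lambda \approx \norm{A}$, which is trivially SDD after scaling by its diagonal. Then halve $\lambda$ repeatedly, down to $\lambda \approx \lambda_{\min}(A)/2$, which takes $O(\log\kappa(A))$ stages. At each stage, the current scaling $Y_\lambda$ makes $A+\lambda I$ SDD. We precondition $A + \lambda/2\cdot I$ by this SDD matrix, which is spectrally within a factor $2$, and approximately solve $(A+\lambda/2 \cdot I) y' = \vec 1$. The solve needs $O(\log \kappa)$ preconditioned iterations, each costing one SDD solve. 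We must get enough entrywise accuracy that $y'$ stays positive with row sums bounded away from zero; this is where the $\poly(d,\kappa)$ accuracy requirement comes from. In total Phase 1 costs $\tilde O(\log^2\kappa(A))$ SDD solves.

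\textbf{Phase 2 (scaling an SDD $M$-matrix).} Now assume $A$ is SDD. Scaling is the optimality condition of the convex problem $\min_{x>0} \tfrac12 x^\top A x - \sum_i \log x_i$, whose stationarity condition is $XAx = \vec 1$. We would follow the central path of a homotopy, for instance $\min \tfrac12 x^\top A x - t\sum_i\log x_i$ blended with a trivially solvable start at $x_0$, and take Newton steps. Each Newton step solves a system in $A + X^{-2}$ (times a parameter), which is SDD because adding a positive diagonal preserves diagonal dominance, with the same sparsity as $A$.

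The key claim is that, because $A$ is SDD, the barrier term dominates enough that the homotopy parameter can be changed by a constant factor per step rather than by $1+1/\sqrt d$ (or $1+d^{-1/3}$ as in \cite{Vladu25}). We would try to prove this by showing that the Newton decrement stays bounded in the $\ell_\infty$-type local norm. SDD-ness gives $\norm{X(A+X^{-2})^{-1}X}_{\infty\to\infty} = O(1)$ by a Neumann-series and diagonal-dominance argument, so the multiplicative changes in $x$ per step are $O(1)$ coordinatewise. Then $O(\log \kappa)$ outer steps bring us close to the central point. Local quadratic convergence of Newton's method adds $O(\log\log(1/\eps))$ more steps, which explains the additive term in the bound.

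\textbf{Error control and main obstacle.} Throughout, we must check that inexact SDD solves at accuracy $\eps/\poly(d,\kappa)$ preserve positivity and the invariants. This would be a routine perturbation bound using the fact that condition numbers are polynomial in $d$ and $\kappa$. The main obstacle is the Phase 2 claim that the path can be followed with constant-factor steps, namely the $\ell_\infty$ stability bound on the Newton step derived from diagonal dominance. We would attempt it first via the entrywise nonnegativity of $(A + X^{-2})^{-1}$ combined with $(A+X^{-2})\vec{x}$-type row-sum bounds, which give $\ell_\infty$ control.
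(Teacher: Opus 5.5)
There is a genuine gap in Phase 2, at exactly the step you flag as the main obstacle. The operator that governs the multiplicative Newton step is $X^{-1}(A+X^{-2})^{-1}X^{-1} = (I+XAX)^{-1}$, which is presumably what you meant. Its $\ell_\infty$ norm is \emph{not} controlled by diagonal dominance of $A$. A Neumann-series argument needs $I+XAX$ itself to be row diagonally dominant, i.e.\ $XAx \geq -(1-c)\vec{1}$, and that is a condition on the iterate $x$, not on $A$.

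For a concrete failure, let $A = L+\epsilon I$ with $L$ the Laplacian of a star with $k$ leaves; this is an SDD $M$-matrix. Let $x$ be $1$ at the center and $\sqrt{k}$ at the leaves. Solving $(I+XAX)w=\vec{1}$ by symmetry gives, as $\epsilon\to 0$, $w_{\mathrm{center}} \to (1+k+k^{3/2})/(1+2k) \approx \sqrt{k}/2$. Hence $\norm{(I+XAX)^{-1}}_\infty = \Omega(\sqrt{d})$. The example is invariant under $A \to cA$, $x \to x/\sqrt{c}$, so normalizing the row sums of $A$ does not help.

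So any $O(1)$ bound must come from an invariant of the path. Your homotopy (``blended with a trivially solvable start'') is not specified enough to provide one. Note also that the pure $t$-homotopy is vacuous, since its minimizers are $\sqrt{t}$ times a single point, so everything hinges on the blending term.

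The missing idea is the sign of that term. The paper proceeds as follows:
\begin{itemize}
\item It uses Phase 1 to produce $x_0$ with $X_0AX_0\vec{1}\geq\vec{1}$ and sets $b \defeq X_0^{-1}(X_0AX_0\vec{1}-\vec{1}) \geq \vec{0}$.
\item It follows $f_\mu(x) = -\mu b^\top x + \frac12 x^\top A x - \sum_i \ln x_i$ from $\mu=1$, where $x_0$ is exactly central, down to $\mu=0$.
\item It measures centrality by $\Delta \defeq \norm{X\nabla f_\mu(x)}_\infty$.
\end{itemize}
Since $XAx = \vec{1} + \mu Xb + X\nabla f_\mu(x) \geq (1-\Delta)\vec{1}$, one gets $(I+XAX)\vec{1}\geq(2-\Delta)\vec{1}$. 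Nonnegativity of $M$-matrix inverses then yields $\norm{(I+XAX)^{-1}}_\infty \leq 1/(2-\Delta)$ (\Cref{lem:inverse-bound-from-delta}). With a sign-indefinite linear term, such as $b=A\vec{1}-\vec{1}$ as in \cite{Vladu25}, this argument does not go through, because $\mu Xb$ is then not controlled by centrality.

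You also need a sharp step bound, not just ``$O(1)$ coordinatewise change.'' Let $h$ be the exact Newton step for $f_{(1-\delta)\mu}$ and $\rho \defeq X^{-1}h$. The bound $\vec{0}\leq(I+XAX)^{-1}\vec{1}\leq\frac34\vec{1}$ gives $\norm{\rho}_\infty \leq \delta+(1-\delta)\Delta/(2-\Delta) < 1$. The identity $\nabla f_{(1-\delta)\mu}(x-h) = -\rho^2/(x-h)$ makes the new centrality the square of this bound. Halving $\mu$ is sustainable only because $\frac98(\frac12+\frac16)^2 \leq \frac12$ then closes the induction (\Cref{lem:mu-update}).

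Your Phase 1 matches the paper's use of \cite{AhmadinejadJSS19}. Rescaling $A$ so that $A+X^{-2}$ is automatically SDD is a harmless alternative to the paper's argument that $I+XAX$ is SDD near the path, and your closing remark points at the right mechanism. But unless the margin $X_0AX_0\vec{1}\geq\vec{1}$ is carried into a nonnegative linear term, the constant-factor path-following claim is unproven.
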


\paragraph{Additional Related Work.}

Over the past decades there have been numerous advances on a variety of graph theoretic, linear algebraic, and convex optimization problems, many of which feature variants of $M$-matrices which we briefly survey below.

\paragraph*{\normalfont\itshape Laplacian and $M$-Matrix Solvers.}
$M$-matrices belong to a well-studied hierarchy of matrix classes that are of high algorithmic interest.
Indeed, $M$-matrices are precisely rescalings of positive-definite SDD $Z$-matrices, the latter of which encompasses Laplacian matrices.
Since the seminal work of Spielman and Teng~\cite{SpielmanT04}, there has been a long line of research on obtaining highly efficient and parallel solvers for Laplacian systems~\cite{KoutisMP10,KoutisMP11,LeeS13,CohenKMPPRX14,PengS14,KyngS16,JambulapatiS21,SachdevaZ23}.
Through a simple reduction, this implies efficient solvers for SDD systems as well.
Finally, \cite{AhmadinejadJSS19} obtained a reduction from making $M$-matrices SDD to SDD system solving itself.
This shows that, up to $\tilde{O}(1)$ factors, $M$-matrix systems, as well as a wide range of Perron--Frobenius type problems, are as easy as solving Laplacians, in both running time and parallel depth.

\paragraph*{\normalfont\itshape Graph Algorithms.}
Apart from the inherent linear algebraic applications, the general interest in the class of matrices mentioned above stems from its connection to fundamental graph problems through what is called the \emph{Laplacian paradigm}.
Indeed, over the past decades, efficient Laplacian solvers combined with various other advancements have led to many improved algorithms for a variety of graph problems, including maximum matching, minimum cost flow, spanning tree sampling, graph sparsification, and many more~\cite{DaitchS08,SpielmanS08,KelnerM09,ChristianoKMST11,KelnerMP12,LeeS14,KathuriaLS20,BrandLNPSS0W20,BrandLLSS0W21,BrandGJLLPS22,JiangKLS25}.

\paragraph*{\normalfont\itshape Interior Point Methods.}
IPMs are powerful frameworks for constrained optimization problems and have been extremely well studied.
Interestingly, they also played a key role in the aforementioned graph algorithm advances.
Classic self-concordance analysis suggests that if the constraint set admits a so-called $\nu$-self-concordant barrier, then there is an $\tilde{O}(\sqrt{\nu})$-iteration IPM; moreover, there is always an $O(d)$-self-concordant barrier in $\R^d$~\cite{NesterovN94}.
In a breakthrough result, \cite{LeeS14} showed that there exists an $\tilde{O}(\sqrt{r})$-iteration IPM where $r \leq d$ is the ``inherent dimension'' of the constraint set: for a linear program with $n$ variables and $d$ constraints, this reduces the number of iterations from $\tilde{O}(\sqrt{n})$ to $\tilde{O}(\sqrt{d})$.\footnote{Such a ``universal'' barrier was known to exist before \cite{LeeS14} but it was computationally intractable.}
Building on \cite{LeeS14}, there have been many works on obtaining improved algorithms for linear programming and related problems~\cite{BrandLSS20,BrandLLSS0W21,Jiang0WZ21}; however, none of them improved the $\tilde{O}(\sqrt{d})$ iteration bound.
There are results that established iteration lower bounds for certain types of IPMs under certain settings~\cite{ToddY96,todd2020lower}, and overall there have been very few IPM instances that obtained an iteration bound beyond self-concordance theory~\cite{CohenLMPS16,BubeckCLL18,Vladu25}.

\paragraph{Preliminaries.}

We use $\vec{0}$ and $\vec{1}$ to denote the all-zero and all-one vectors whose dimension shall be clear from context.
For vector $x \in \R^d$ we let $X \defeq \diag(x) \in \R^{d \times d}$ be the diagonal matrix associated with $x$.
We write $x^2 \in \R^d$, $1/x \in \R^d$, $x/y\in\R^d$ for $y\in \R^d$ as applying squaring, inverting, and division to $x$ entry-wise.
We also let comparison between vectors be done entry-wise, e.g., $\vec{0} < x \leq y$, in which case the vectors shall be of the same dimension.

Matrices are assumed to be symmetric throughout the paper.
Let $\nnz(A)$ denote the number of nonzero entries in a matrix $A$.
We assume without loss of generality that $\nnz(A) = \Omega(d)$.
We generally work with a \emph{$Z$-matrix} for which the off-diagonal entries are non-positive.
As mentioned previously, an \emph{$M$-matrix} is a positive definite $Z$-matrix.

For a positive definite matrix $A$, we let $\kappa(A) \defeq \lambda_{\max}(A) / \lambda_{\min}(A)$ denote its \emph{condition number}.
For $x \in \R^d_{>0}$ we have $\kappa(X) = \max_{i \in [d]}x_i / \min_{i \in [d]}x_i$.
We let $\norm{A}_p \defeq \max_{\norm{x}_p = 1}\norm{Ax}_p$ denote the $\ell_p$-operator norm of $A$.
In particular, $\norm{A}_{\infty} = \max_{i \in [d]}\sum_{j \in [d]}|A_{ij}|$.
We work with different $\ell_p$-norms of $A$ almost interchangeably since they differ by at most a $\poly(d)$ factor.
The $\ell_p$-norms of $A$ and $A^{-1}$ are also known to approximate the top/bottom eigenvalues of $A$.
In particular, our algorithm needs access to some $\ell_p$-norm or top eigenvalue of $A$ which are hard to compute exactly, in which case we may use the easy-to-compute proxy $\norm{A}_{\infty}$.

\begin{fact}
  For a positive definite matrix $A$ we have $\norm{A}_2 = \lambda_{\max}(A)$, $\norm{A^{-1}}_2 = 1/\lambda_{\min}(A)$, and $\norm{A}_q/d \leq \norm{A}_p \leq \norm{A}_q \cdot d$ for any $1 \leq p, q \leq \infty$.
\end{fact}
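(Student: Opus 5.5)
The statement is the Fact: for positive definite (symmetric) $A$, $\norm{A}_2 = \lambda_{\max}(A)$, $\norm{A^{-1}}_2 = 1/\lambda_{\min}(A)$, and $\norm{A}_q/d \le \norm{A}_p \le d\norm{A}_q$ for all $1 \le p,q \le \infty$. I would treat the two spectral identities and the norm-equivalence separately.

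For the first two identities I would use the spectral theorem. Write $A = U\Lambda U^\top$ with $U$ orthogonal and $\Lambda$ diagonal with entries $\lambda_i > 0$. Since $U$ preserves $\ell_2$ norms, $\norm{Ax}_2^2 = \sum_i \lambda_i^2 (U^\top x)_i^2$. This is at most $\lambda_{\max}^2\norm{x}_2^2$, with equality at the top eigenvector, so $\norm{A}_2 = \lambda_{\max}(A)$. The matrix $A^{-1} = U\Lambda^{-1}U^\top$ is again positive definite, and its largest eigenvalue is $1/\lambda_{\min}(A)$. Applying the first identity to it gives $\norm{A^{-1}}_2 = 1/\lambda_{\min}(A)$.

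For the norm comparison I would use the standard vector-norm inequalities on $\R^d$. For $1\le r\le s\le\infty$,
\[
\norm{x}_s \le \norm{x}_r \le d^{1/r-1/s}\norm{x}_s ,
\]
where the second inequality follows from H\"older. Hence any two of these vector norms are within a factor $d^{|1/p-1/q|} \le d$ of each other. For the operator norms, the route through a fixed pair $\alpha,\beta$ with $\norm{x}_p \le \alpha\norm{x}_q$ and $\norm{x}_q\le\beta\norm{x}_p$ would give
\[
\norm{Ax}_p \le \alpha\norm{Ax}_q \le \alpha\norm{A}_q\norm{x}_q \le \alpha\beta\norm{A}_q\norm{x}_p .
\]
However, $\alpha\beta = d^{|1/p-1/q|}$, which can be as large as $d$ only if one direction of the vector-norm inequality is tight with constant $1$. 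That holds here, since one of $\alpha,\beta$ is always $1$. So the product is exactly $d^{|1/p-1/q|} \le d$, which gives $\norm{A}_p \le d\norm{A}_q$. Swapping the roles of $p$ and $q$ gives the other inequality.

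The only step needing care is this constant bookkeeping, namely checking that $\alpha\beta = d^{|1/p-1/q|}$ rather than $d^{2|1/p-1/q|}$. It does hold, because one of the two comparison constants is $1$. As a fallback, one could instead compare each of $\norm{A}_p$ and $\norm{A}_q$ to the $\ell_1$ and $\ell_\infty$ norms via Riesz--Thorin interpolation and symmetry, using $\norm{A}_1=\norm{A}_\infty$ for symmetric $A$. By log-convexity this gives $\norm{A}_2 \le \norm{A}_p \le \norm{A}_1$ for every $p$, and $\norm{A}_1 \le \sqrt d\,\norm{A}_2$. Together these bound every ratio by $\sqrt d \le d$.
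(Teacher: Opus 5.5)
Your proof is correct. The spectral theorem gives the two identities. For the norm comparison, the chain $\norm{Ax}_p \le \alpha\norm{Ax}_q \le \alpha\beta\norm{A}_q\norm{x}_p$, with one of $\alpha,\beta$ equal to $1$, gives the factor $d^{|1/p-1/q|}\le d$; this part does not even use symmetry. The sentence beginning ``However, $\alpha\beta = \ldots$'' is worded confusingly: the product equals $d^{|1/p-1/q|}$ \emph{because} one constant is $1$, and it reaches $d$ only when $\{p,q\}=\{1,\infty\}$. The bookkeeping itself is right.

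The paper states this as a standard fact without proof, so there is no argument of its own to compare against. Your reasoning is the standard justification the paper implicitly relies on. Your Riesz--Thorin fallback, which does use symmetry, even gives the sharper factor $\sqrt d$.
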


Our analysis heavily relies on the following lemma that follows from the nonnegativity of $M$-matrix inverses.

\begin{lemma}
  For an M-matrix $A$ with $A\vec{1} \geq c \cdot \vec{1}$ where $c > 0$, we have $\|A^{-1}\|_{\infty} \leq 1/c$.
  \label{lem:mmatrix-norm}
\end{lemma}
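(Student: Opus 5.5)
The plan rests on the standard fact, taken as known, that the inverse of an $M$-matrix is entrywise nonnegative, i.e., $A^{-1} \geq 0$. With this in hand, the $\ell_\infty$-operator norm of $A^{-1}$ is simply its maximum row sum. We get
\[
\norm{A^{-1}}_\infty = \max_{i \in [d]} \sum_{j} |(A^{-1})_{ij}| = \max_{i\in[d]} (A^{-1}\vec{1})_i = \norm{A^{-1}\vec{1}}_\infty .
\]
So it suffices to show that $A^{-1}\vec{1} \leq \frac{1}{c}\vec{1}$ entrywise.

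To bound $A^{-1}\vec{1}$, write $y \defeq A\vec{1}$. By hypothesis $y - c\vec{1} \geq \vec{0}$. Multiplying by the entrywise nonnegative matrix $A^{-1}$ preserves entrywise nonnegativity, so
\[
A^{-1}(y - c\vec{1}) \geq \vec{0}.
\]
Since $A^{-1}y = \vec{1}$, this rearranges to $c\,A^{-1}\vec{1} \leq \vec{1}$. Dividing by $c > 0$ gives $A^{-1}\vec{1} \leq \frac{1}{c}\vec{1}$, and therefore $\norm{A^{-1}}_\infty \leq 1/c$.

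The only nontrivial ingredient is the nonnegativity of $A^{-1}$. If it needs justifying, write $A = sI - B$ with $s > \max_i A_{ii}$, so that $B \geq 0$ entrywise because $A$ is a $Z$-matrix. Positive definiteness then gives $\lambda_{\max}(B) = s - \lambda_{\min}(A) < s$. Since $B$ is symmetric and nonnegative, its spectral radius equals $\lambda_{\max}(B)$ (Perron--Frobenius), so $\rho(B) < s$. The Neumann series
\[
A^{-1} = \frac{1}{s}\sum_{k \geq 0} (B/s)^k
\]
therefore converges, and every term is entrywise nonnegative, so $A^{-1} \geq 0$.
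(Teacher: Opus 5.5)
Your proposal is correct and follows the paper's proof: use entrywise nonnegativity of $A^{-1}$ to multiply $A\vec{1} \geq c\vec{1}$ through and get $cA^{-1}\vec{1} \leq \vec{1}$, then identify $\norm{A^{-1}}_\infty$ with the maximum row sum $\max_i (A^{-1}\vec{1})_i$. The paper cites Berman--Plemmons for the nonnegativity of $A^{-1}$ rather than proving it. Your Neumann-series justification is a valid self-contained substitute in the paper's symmetric setting, where $A = sI - B$ with $B \geq 0$ and $\rho(B) < s$.
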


\begin{proof}
    As
    $A^{-1}$ is entry-wise nonnegative~\cite{BermanP94},  multiplying both sides by $A^{-1}$, we have $\vec{1} \geq c A^{-1} \vec{1}$.
    The $\ell_{\infty}$-norm of a nonnegative matrix is equal to the maximum row sum, so $\|A^{-1}\|_{\infty} = \max_{i \in [d]} [A^{-1} \vec{1}]_i \leq 1/c$.
\end{proof}

Let $H \in \R^{d \times d}$ be positive definite.
We often consider solving linear systems in $H$ approximately.
For $v \in \R^d$ and $\eta \geq 0$, let $\A_{\eta}(H, v) \defeq \{u \in \R^d : \norm{u - H^{-1}v}_2 \leq \eta \norm{v}_2 \norm{H^{-1}}_2\}$ consist of \emph{$\eta$-approximate solutions to $H^{-1}v$}.\footnote{This is the natural scale-invariant notion of approximation for solving linear systems.}

\paragraph{AI Usage Disclosure.}
The main ideas of the paper were developed primarily by the authors in tandem with general research discussions with LLMs regarding the problem.
LLMs were also used to assist with certain mathematical derivations, writing, formatting, and proofreading.
The authors assume full responsibility for all content.

%% file: overview.tex
\section{Technical Overview} \label{sec:overview}

In this section we motivate our algorithm and introduce the key technical insights underlying it. 
Our approach is inspired by and heavily builds on the IPM  of \cite{Vladu25}.
We first describe how this IPM is set up and analyzed in \cite{Vladu25}, and then we give intuition on how we obtain a faster algorithm.
Note that the focus of \cite{Vladu25} is on obtaining a $d^{1/3}$-iteration algorithm; however, we will only discuss and rely on a (rather standard) $d^{1/2}$-iteration analysis.

\paragraph{A Path-Following Approach to Matrix Scaling.}
To solve the matrix scaling problem, \cite{Vladu25} considers the function $f_{\mu}(x) \defeq \frac{1}{\mu} \cdot (\frac{1}{2}x^\top A x - b^\top x) - \sum_{i \in [d]}\ln x_i$ for some vector $b \in \R^d$. \cite{Vladu25} shows that to solve the matrix scaling problem, it suffices to compute an (approximate) minimizer of $f_{\mu}$ for a sufficiently large $\mu$.
To see why this is the case, let $x_\mu  \in \R^d_{> 0}$ denote the optimizer of $f_{\mu}$. The optimality conditions for $x_\mu$ imply that $\vec{0} = \nabla f_{\mu}(x_\mu) = \frac{1}{\mu} (A x_\mu - b) - 1 / x_\mu$,
and therefore $\tilde{X}_\mu A \tilde{X}_\mu \vec{1} = \vec{1} + \frac{1}{\sqrt{\mu}}\tilde{X}_\mu b$, where $\tilde{x}_\mu \defeq x_\mu / \sqrt{\mu}$.
It can then be shown that when $\mu$ is sufficiently large, the contribution of $ \tilde{X}_\mu b$ is sufficiently small and $\tilde{X}_\mu A \tilde{X}_\mu \vec{1}$ is close to $\vec{1}$.

To compute an approximate $x_\mu$ for large $\mu$,  \cite{Vladu25} considered the central path, i.e., the curve $\{x_{\mu} : \mu > 0\}$, and applied what is called a \emph{path-following algorithm} to approximately follow this curve for increasing values of $\mu$. Starting from an initial small $\mu_0$ for which the minimizer $x_{\mu_0}$ of $f_{\mu_0}$ can be easily computed, the algorithm iteratively increases $\mu$ while staying on the central path.
This is often done by arguing that as long as $x_{\mu}$ is sufficiently ``close'' to $x_{\mu'}$ where $\mu' > \mu$, we can minimize $f_{\mu'}$ efficiently by taking a few Newton steps from $x_{\mu}$.
With this, the choice of the distance measure is thus crucial to the analysis of the IPM:
If we could increase the region that is considered ``close'' to the central path, then we can increase $\mu$ more aggressively and decrease the overall number of iterations.

\paragraph{The Standard $O(d^{1/2})$ Analysis.}
Since we want to minimize $f_{\mu}$ by Newton steps, to measure the distance from some $x$ to the central path, it is natural to consider the size of the Newton step at $x$.
This is captured by the notion of the \emph{congestion vector}
\[ \rho_{x,\mu} \defeq X^{-1} \nabla^2 f_{\mu}(x)^{-1} \nabla f_{\mu}(x) = \frac{1}{\mu} (I + A_{\mu}(x))^{-1} X (A x - b - \mu/x), \]
where $A_{\mu}(x) \defeq \mu^{-1} X A X$.
As is standard in IPM literature, both \cite{Vladu25} and our algorithm
use the norm of $\rho_{x,\mu}$ as the measure of how close $x$ is to the central path.
Let $x^\prime \defeq x - \nabla^2 f_{\mu}(x)^{-1} \nabla f_{\mu}(x) = X (\vec{1} - \rho_{x,\mu})$ be the result of taking a Newton step from $x$.
It can be shown
that
\[
  \nabla f_{\mu}(x^\prime) = -\frac{\rho^{2}_{x,\mu}}{x^\prime}
\]
and therefore
\begin{equation}
  \rho_{x^\prime, \mu} = -(I + A_{\mu}(x^\prime))^{-1} \rho_{x,\mu}^2.
  \label{eq:new-congestion}
\end{equation}
Since $A_{\mu}(x^\prime)$ is PSD, we have that $\|\rho_{x^\prime,\mu}\|_2 \leq \|\rho^2_{x,\mu}\|_2 \leq \|\rho_{x,\mu}\|_2^2$.
Consequently, if we use the $\ell_2$-norm of the congestion vector as the distance measure to the central path, as long as we stay within the region where $\|\rho_{x,\mu}\|_2 < 1/2$, we can quickly compute $x_{\mu}$ from $x$ using Newton steps.

We can analyze the algorithm using the $\ell_2$-norm. %
Consider a perfectly centered $x_{\mu}$.
We can increase $\mu$ to $\mu^\prime$ while still allowing efficient centering as long as $\|\rho_{x_{\mu}, \mu^\prime}\|_2 < 1/2$.
Suppose that $\mu^\prime = \mu / (1 - \delta)$.
We have that
\begin{equation}
\rho_{x_{\mu},\mu^\prime} = -\delta (I + A_{\mu^\prime}(x_{\mu}))^{-1} \vec{1}.
\label{eq:mu-change}
\end{equation}
Again, since $A_{\mu^\prime}(x_{\mu})$ is PSD, we have $\|\rho_{x_{\mu}, \mu^\prime}\|_2 \leq  \delta \sqrt{d}$.
In other words, we can increase $\mu$ by a $1 + \Theta(1/\sqrt{d})$ factor while maintaining the ability to move back to the central path.
This gives a $\tilde{O}(\sqrt{d})$-iteration method.

Notice that since $\|\rho_{x,\mu}^2\|_2 = \|\rho_{x,\mu}\|_4^2$, we actually only need a small $\ell_4$-norm of $\rho_{x,\mu}$ for centering (as opposed to the larger $\ell_2$-norm).
However, in general there is no better  bound on the $\ell_4$-norm of $(I + A_{\mu^\prime}(x_{\mu}))^{-1} \vec{1}$ other than that $\|(I + A_{\mu^\prime}(x_{\mu}))^{-1} \vec{1}\|_4 \leq \|(I + A_{\mu^\prime}(x_{\mu}))^{-1} \vec{1}\|_2$ in \eqref{eq:mu-change} if we only use that $A_{\mu^\prime}(x_{\mu})$ is PSD.
This $\ell_4$-versus-$\ell_2$ discrepancy is a common phenomenon in IPMs, which \cite{Vladu25} circumvented using a ``global'' amortized analysis involving $\ell_3$-norms (and that $A_{\mu^\prime}(x_{\mu})$ is an $M$-matrix).\footnote{There have been other attempts to address the gap between $\ell_4$- and $\ell_2$-norm in other contexts. For example, for maximum flow, \cite{ChristianoKMST11,Madry13,Madry16} argued that when $\norm{\rho}_4 \approx \norm{\rho}_2$ then one can perturb the graph and increase a potential which implies that the number of such iterations is bounded.
On the other hand, when $\norm{\rho}_4 \ll \norm{\rho}_2$, we can take a much larger IPM step.

Both these algorithms and \cite{Vladu25} are more ``global'' compared to the standard $d^{1/2}$ analysis in the sense that they do not obtain an improved uniform bound on the progress of each ``local'' step; instead, they rely on a ``global'' argument that involves all steps in the IPM algorithm.
}

\paragraph{Our Approach.}

In this paper, we go beyond the $\ell_3$-norm and analyze the IPM using the $\ell_{\infty}$-norm of the congestion vector as a measure of centrality. %
The goal is to show that $\|(I + A_{\mu}(x))^{-1}\|_{\infty} \leq O(1)$ using the stronger fact that $A_{\mu}(x)$ is an $M$-matrix;
indeed, we see that such a bound combined with \eqref{eq:new-congestion} and \eqref{eq:mu-change} would suggest that $\norm{\rho_{x_{\mu},\mu'}}_{\infty} = O(\delta)$ and therefore setting $\delta = \Theta(1)$ still allows efficient centering.

To that end, we leverage an important property that the $\ell_{\infty}$-norm of $M^{-1}$ for an $M$-matrix $M$ is simply its maximum row/column sum, i.e., $\max_{i \in [d]}[M^{-1} \vec{1}]_i$.
Let us take a closer look at $(I + A_{\mu}(x))$ which is also an $M$-matrix.
We have
\[
  (I + A_{\mu}(x))\vec{1} = \vec{1} + \frac{1}{\mu} X A x = \vec{1} + (I + A_{\mu}(x)) \rho_{x,\mu} + \frac{1}{\mu}Xb + \vec{1}.
\]
Multiplying both sides by $(I + A_{\mu}(x))^{-1}$, we get
\begin{equation}
  \vec{1} = (I + A_{\mu}(x))^{-1} \left(\vec{2} + \frac{1}{\mu}Xb\right)  + \rho_{x,\mu} \implies (I + A_{\mu}(x))^{-1}\vec{1} = \frac{1}{2} \cdot \left(\vec{1} - \rho_{x,\mu} - \frac{1}{\mu}(I + A_{\mu}(x))^{-1} Xb\right).
  \label{eq:inf-norm}
\end{equation}

\paragraph{The Key Observation.}
Given the above calculation, notice that if the vector $b$ happens to be entry-wise nonnegative, then the right-hand side of \eqref{eq:inf-norm} is entry-wise upper-bounded by $\frac{1}{2}(\vec{1} - \rho_{x,\mu})$.
We can therefore bound $\|(I + A_{\mu}(x))^{-1}\|_{\infty} \leq \frac{1 + \|\rho_{x,\mu}\|_{\infty}}{2}$.
When $\|\rho_{x,\mu}\|_{\infty} < 1/2$, we get $\|(I + A_{\mu}(x))^{-1}\|_{\infty} \leq 1$ which would lead us to an $\tilde{O}(1)$-iteration algorithm.

Obtaining our fast convergence rate thus boils down to the vector $b$.
Our key observation is that for the algorithm to correctly find the scaling, \emph{the choice of $b$ can almost be arbitrary.}
Indeed, the only place where $b$ shows up in the analysis is the initialization and termination of the IPM.
Most notably, the vector $b$ should be chosen so that we can easily compute an initial point $x_0$ that is centered for some $\mu_0$.
In \cite{Vladu25}, the vector $b$ is initialized to $A\vec{1} - \vec{1}$ so that $x_0 \defeq \vec{1}$ is centered for $\mu_0 \defeq 1$;
however, this $b$ is not necessarily a nonnegative vector.

We observe that there is an alternative way of initializing the path-following algorithm that will make $b$ nonnegative.
We leverage the fact that the $M$-matrix $A$ admits a vector $y \in \R^d_{>0}$ such that $Y A Y \vec{1} \geq \gamma \cdot \vec{1} > \vec{0}$ for some $\gamma$ bounded from below.
Such a vector $y$ is related to the well-studied problems in Perron--Frobenius theory and solving linear systems in $M$-matrices, and \cite{AhmadinejadJSS19} gave an algorithm that computes $y$ by solving $\tilde{O}(1)$ Laplacian linear systems.
With $y$, we can initialize our path-following algorithm with $x_0 \defeq y / \sqrt{\gamma}$ (which satisfies $X_0 A X_0 \vec{1} \geq \vec{1}$) and $\mu_0 \defeq 1$, which is centered if we set $b \defeq X_0^{-1} (X_0 A x_0 - \vec{1})$ which is now entry-wise nonnegative.
This completes the analysis of our $\tilde{O}(1)$ iteration algorithm.

\paragraph{Our Actual Analysis.}

We presented an analysis above that closely follows the notation and setup in \cite{Vladu25} for building intuition and illustrating the main technical insights.
However, in \Cref{sec:scaling}, we
use a slightly different function
\begin{equation}
f_{\mu}(x) \defeq -\mu \cdot b^\top x + \left(\frac{1}{2}x^\top A x - \sum_{i \in [d]}\ln x_i\right).
\label{eq:new-f}
\end{equation}
Ignoring the $-\mu \cdot b^\top x$ term (which happens when we decrease $\mu$ to a small number), the optimality condition of \eqref{eq:new-f} directly gives $A x = 1/x \implies X A X \vec{1} = \vec{1}$.
We also use a slightly different centrality measure $\norm{X\nabla f_{\mu}(x)}_{\infty}$ instead of $\norm{\rho_{x,\mu}}_{\infty}$ (see more discussion in \Cref{sec:scaling}).
We think this new version is mathematically cleaner to work with,
but the basis of our improvement remains the same as aforementioned.

\paragraph{Comparison with Quadratic Optimization and Limitations of Our Approach.}

\cite{Vladu25} also considered the closely related quadratic optimization problem which for a given matrix $A$ and vector $b$ asks for $\min_{x \geq 0}\frac{1}{2}x^\top A x - b^\top x$.
When $A$ is a symmetric $M$-matrix, \cite{Vladu25} gave a similar $\tilde{O}(d^{1/3})$-iteration IPM for this problem, which uses the solution to the matrix scaling problem as the initial point.

While our algorithm improves the $M$-matrix scaling problem,
our key ingredient fails for
the
quadratic optimization problems because we can no longer choose a nonnegative vector $b$ freely.
In general, however, there does not seem to be a direct reduction from one to the other, and the two problems seem fairly incomparable: Indeed, one has a linear term $b^\top x$ on top of $\frac{1}{2}x^\top A x$ and the other has $\sum_i \ln(x_i)$.
We leave obtaining a deeper understanding of the two problems as an interesting open problem.\footnote{Interestingly, quadratic optimization for $M$-matrices can be reduced to the \emph{$\ell_2$-norm flow diffusion} problem for which \cite{ChenPW21} gave a \emph{near-linear} time algorithm. In contrast, a similar ``graphical'' approach reduces matrix scaling to convex-cost flow (as in \Cref{appendix:convex-flow}) which seems to be a much harder problem and for which only almost-linear time algorithms are known~\cite{ChenKLPGS22}. As discussed earlier, the flow approach was also the best known sequential algorithm for $M$-matrix scaling prior to our work.}

%% file: scaling.tex
\section{Efficient Parallel $M$-Matrix Scaling}\label{sec:scaling}

In this section we provide a formal analysis of our algorithm.
In \Cref{sec:path-following} we show that given an appropriate initial $x_0$, our path-following algorithm computes the desired scaling by solving $\tilde{O}(1)$ linear systems over diagonally dominant matrices.
In \Cref{sub:predcondition} we show how to compute the initial $x_0$ which then allows us to prove \Cref{thm:main-randomized} using the fact that diagonally dominant linear systems can be solved in near-linear time and poly-logarithmic parallel depth.

\subsection{\texorpdfstring{$\ell_\infty$}{L_infinity}-Path Following Algorithm}\label{sec:path-following}

Here we describe our $\ell_{\infty}$-path following algorithm.
For a fixed nonnegative vector $b\in \R^d_{\geq 0}$ and all $\mu \geq 0$, let $f_\mu(x)\defeq -\mu b^\top x+\frac{1}{2}x^\top A x-\sum_{i\in[d]}\log x_i$, and let $A(x)\defeq XAX$.\footnote{Technically, the definitions of $f$ and $\Delta_{\mu}^{\infty}(x)$ depend on the choice of $b$ which is not made explicit until the end of the analysis. However, for notational convenience, we may simply refer to $f$ and $\Delta_{\mu}^{\infty}(x)$ as parameterized by the vector $b$ given to each lemma statement.} 
Thus, we have 
\[\nabla f_\mu(x)=Ax-\mu b-X^{-1}\vec{1}\quad\mbox{and}\quad \nabla^2 f_\mu(x)=A+X^{-2}=X^{-1}(I+A(x))X^{-1}.\]
We define the $\ell_\infty$-measure of centrality
\[
\Delta^\infty_\mu(x)\defeq \norm{X\nabla f_\mu(x)}_\infty
\text{ for all }
x \in \R^{d}_{>0}\,,
\]
which satisfies $\Delta^\infty_\mu(x)=\norm{X(Ax-\mu b-X^{-1}\vec{1})}_\infty$.
Note that $X\nabla f_{\mu}(x) = (I + A(x)) \rho_{x,\mu}$ where $\rho_{x,\mu}$ is the congestion vector from \Cref{sec:overview} (and in \eqref{eq:congestion} later).

\Cref{alg:path-following} presents the pseudocode of our algorithm.
The algorithm is given an initial point $x_0 > \vec{0}$ which is assumed to satisfy $X_0AX_0\vec{1}\geq\vec{1}$.
We set $b=X_0^{-1}(X_0AX_0\vec{1}-\vec{1})$, $\mu= 1$, and start with $x = x_0$ which by \Cref{lem:initialization} is perfectly centered with $\Delta^{\infty}_{\mu}(x) = 0$.
The algorithm has two phases.
In the first one, we repeatedly decrease $\mu$ and perform one approximate Newton step in $f_{\mu}$ to re-center until
$\mu$ is sufficiently small.
Once $\mu$ is small enough, we enter the second phase of the algorithm where we set $\mu$ all the way to $0$.
This only increases the centrality of $x$ slightly (see \Cref{lem:mu-set-zero}), and allows us to now directly decrease our objective $\Delta_{0}^{\infty}(x) = \norm{XAX \vec{1} - \vec{1}}_{\infty}$.
This is done by taking multiple Newton steps in $f_0$ where each of them decreases $\|XAX \vec{1} - \vec{1}\|_{\infty}$ quadratically.

\input{\pathfollowingalgorithm}

We first verify that our initialization is valid.

\begin{lemma}\label{lem:initialization}
	Suppose $A\in \R^{d\times d}$  and $x_0\in \R^d_{>0}$ satisfy $X_0AX_0\vec{1}\geq\vec{1}$.
	Let $\mu_0\defeq 1$ and $b\defeq X_0^{-1}(X_0AX_0\vec{1}-\vec{1})$.
	Then, $b \geq \vec{0}$ and $\Delta_{\mu_0}^{\infty}(x_0) = 0$.
\end{lemma}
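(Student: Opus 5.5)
This is a direct computation from the definitions, and I would carry it out in two short steps.

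\emph{Step 1: $b \geq \vec{0}$.} By hypothesis $X_0AX_0\vec{1}-\vec{1}\geq\vec{0}$ entrywise. Since $x_0>\vec{0}$, the matrix $X_0^{-1}$ is diagonal with positive diagonal entries. Multiplying a nonnegative vector by it keeps every entry nonnegative, so $b = X_0^{-1}(X_0AX_0\vec{1}-\vec{1}) \geq \vec{0}$.

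\emph{Step 2: $\Delta_{\mu_0}^{\infty}(x_0)=0$.} I would substitute into $\Delta^\infty_\mu(x)=\norm{X(Ax-\mu b-X^{-1}\vec{1})}_\infty$ with $\mu=\mu_0=1$ and $x=x_0$, using $X_0\vec{1}=x_0$. First,
\[
X_0 A x_0 = X_0AX_0\vec{1}.
\]
Next,
\[
X_0\,\mu_0 b = X_0X_0^{-1}(X_0AX_0\vec{1}-\vec{1}) = X_0AX_0\vec{1}-\vec{1}.
\]
Finally, $X_0X_0^{-1}\vec{1}=\vec{1}$. Combining these,
\[
X_0\nabla f_{\mu_0}(x_0) = X_0AX_0\vec{1}-(X_0AX_0\vec{1}-\vec{1})-\vec{1} = \vec{0},
\]
so its $\ell_\infty$-norm is $0$.

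Neither step is a real obstacle. The only point needing care is that $X_0$ is invertible with positive diagonal, which is exactly what $x_0\in\R^d_{>0}$ gives. Note that the $M$-matrix structure of $A$ is not used here; only the hypothesis $X_0AX_0\vec{1}\geq\vec{1}$ matters, and only for the sign of $b$.
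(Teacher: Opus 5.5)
Your proposal is correct and follows the paper's proof: $b \geq \vec{0}$ comes from the hypothesis together with positivity of $X_0^{-1}$, and the direct substitution gives $X_0\nabla f_{\mu_0}(x_0) = X_0AX_0\vec{1}-(X_0AX_0\vec{1}-\vec{1})-\vec{1} = \vec{0}$. You are also right that the $M$-matrix structure of $A$ plays no role in this lemma.
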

\begin{proof}
    The fact that $b \geq \vec{0}$ follows immediately from its definition and that $X_0 A X_0 \vec{1} \geq \vec{1}$.
    We can also compute
	\[
    \begin{split}
      \Delta_{\mu_0}^{\infty}(x_0)
      &= \norm{X_0\nabla f_{\mu_0}(x_0)}_{\infty}
      = \norm{X_0AX_0\vec{1}-\mu_0X_0b-\vec{1}}_{\infty} \\
      &= \norm{X_0AX_0\vec{1}-(X_0AX_0\vec{1}-\vec{1})-\vec{1}}_{\infty}
      = 0. \qedhere
    \end{split}
	\]
\end{proof}

We now analyze the main loops of the algorithm.
We start by connecting the centrality $\Delta_{\mu}^{\infty}(x)$ to the $\ell_{\infty}$-norm of $(I + A(x))^{-1}$.

\begin{lemma}\label{lem:inverse-bound-from-delta}
	Suppose $b \in \R^d_{\geq 0}$ and $A$ is a symmetric $M$-matrix. For all $x \in \R^d_{> 0}$ such that $\Delta^\infty_{\mu}(x) < 2$ for some $\mu \geq 0$ we have
	\[
	\norm{(I + A(x))^{-1}}_\infty \leq \frac{1}{2 - \Delta^\infty_{\mu}(x)}\,.
	\]
\end{lemma}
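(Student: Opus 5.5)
The plan is to reduce everything to \Cref{lem:mmatrix-norm} applied to the matrix $M \defeq I + A(x) = I + XAX$. First, $M$ is a symmetric $M$-matrix. Indeed, since $x > \vec{0}$, the off-diagonal entries of $XAX$ are $x_i x_j A_{ij} \leq 0$. Moreover, $XAX$ is positive definite because $A$ is, and adding $I$ preserves both properties. So it suffices to find a constant $c > 0$ with $M\vec{1} \geq c \cdot \vec{1}$. The target is $c = 2 - \Delta^\infty_\mu(x)$, which is positive by the hypothesis $\Delta^\infty_\mu(x) < 2$.

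To get the row-sum bound, expand $M\vec{1} = \vec{1} + XAX\vec{1} = \vec{1} + XAx$. Write $XAx$ in terms of the scaled gradient: by the formula for $\nabla f_\mu$,
\[
X\nabla f_\mu(x) = XAx - \mu Xb - \vec{1},
\]
so
\[
XAx = X\nabla f_\mu(x) + \mu X b + \vec{1}.
\]
Hence
\[
M\vec{1} = \vec{2} + \mu X b + X\nabla f_\mu(x).
\]
Now use the key observation from \Cref{sec:overview}: since $\mu \geq 0$, $b \geq \vec{0}$ and $x > \vec{0}$, the term $\mu Xb$ is entry-wise nonnegative and can be dropped. Each entry of $X\nabla f_\mu(x)$ is at least $-\norm{X\nabla f_\mu(x)}_\infty = -\Delta^\infty_\mu(x)$. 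Therefore
\[
M\vec{1} \geq \bigl(2 - \Delta^\infty_\mu(x)\bigr)\vec{1}.
\]

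Applying \Cref{lem:mmatrix-norm} with $c = 2 - \Delta^\infty_\mu(x) > 0$ gives $\norm{(I + A(x))^{-1}}_\infty \leq 1/(2 - \Delta^\infty_\mu(x))$, which is the claim. There is no real obstacle here. The only points to check are that $I + XAX$ is genuinely an $M$-matrix, so the lemma applies, and that the sign conventions make $\mu Xb$ help rather than hurt. Both follow directly from $b \geq \vec{0}$, $\mu \geq 0$ and $x > \vec{0}$.
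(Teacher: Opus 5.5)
Your proposal is correct and follows essentially the same argument as the paper. Both express $(I + A(x))\vec{1}$ through $X\nabla f_\mu(x)$, drop the nonnegative $\mu Xb$ term, bound the result below by $(2 - \Delta^\infty_\mu(x))\vec{1}$, and invoke \Cref{lem:mmatrix-norm}. Your explicit check that $I + XAX$ is an $M$-matrix is a detail the paper leaves implicit.
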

\begin{proof}
	Note that
	\[
	X\nabla f_\mu(x)=A(x)\vec{1}-\mu Xb-\vec{1}.
	\]
	Since $X\nabla f_\mu(x)\geq -\Delta^\infty_\mu(x)\vec{1}$ and $b\geq \vec{0}$, we have
    \begin{equation}
	A(x)\vec{1}
	\geq
	\mu Xb+\vec{1}-\Delta^\infty_\mu(x)\vec{1}
	\geq
	(1-\Delta^\infty_\mu(x))\vec{1}.
    \label{eq:matrix-bound}
    \end{equation}
	Consequently, $(I+A(x))\vec{1}\geq (2-\Delta^\infty_\mu(x))\vec{1}$, and the result follows from \Cref{lem:mmatrix-norm}.
\end{proof}

To analyze the Newton steps, as in \cite{Vladu25}, we define the \emph{congestion vector}
\begin{equation}
\rho_{x,\mu} \defeq X^{-1} \nabla^2 f_{\mu}(x)^{-1} \nabla f_{\mu}(x).
\label{eq:congestion}
\end{equation}
The following claim expresses the gradient after one Newton step in terms of the congestion vector.
This is exactly the same as the calculation that was done in \cite[Equation (3)]{Vladu25}, but since we use a different $f_{\mu}$ function, we redo it here for completeness.

\begin{claim}[{\cite[Equation (3)]{Vladu25}}]\label{claim:congestion-vector}
Let $x \in \R^d_{> 0}$ and $\mu\geq 0$.
We have, for $y \defeq x - h$ where $h \defeq (\nabla^2 f_\mu(x))^{-1} \nabla f_\mu(x)$,
that
if $y>\vec{0}$, then
 $\nabla f_{\mu}(y) = -\rho^{2}_{x,\mu}/y$.
\end{claim}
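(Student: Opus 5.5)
This is a direct computation: write the Newton step in scaled coordinates, then expand the gradient at $y$ using linearity of $Ax$ and the diagonal structure of the log-barrier.

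\textbf{Step 1: the step in scaled coordinates.} By definition $\rho_{x,\mu} = X^{-1}h$, so $h = X\rho_{x,\mu}$. Hence
\[
y = x - X\rho_{x,\mu} = X(\vec{1}-\rho_{x,\mu}),
\]
and the hypothesis $y>\vec{0}$ is exactly $\rho_{x,\mu}<\vec{1}$ entrywise. This guarantees that $1/y$ and $\nabla f_\mu(y)$ are defined.

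\textbf{Step 2: the Newton equation.} Since $\nabla^2 f_\mu(x) = A + X^{-2}$, the defining relation $\nabla^2 f_\mu(x)\,h = \nabla f_\mu(x)$ reads
\[
Ah + X^{-2}h = Ax - \mu b - X^{-1}\vec{1}.
\]

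\textbf{Step 3: expand the gradient at $y$.} Write
\[
\nabla f_\mu(y) = A(x-h) - \mu b - 1/y = \bigl(Ax - \mu b\bigr) - Ah - 1/y .
\]
Substituting $Ax - \mu b - Ah = X^{-1}\vec{1} + X^{-2}h$ from Step 2 gives
\[
\nabla f_\mu(y) = X^{-1}\vec{1} + X^{-2}h - 1/y .
\]

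\textbf{Step 4: evaluate coordinatewise.} Abbreviate $\rho = \rho_{x,\mu}$. In coordinate $i$ we have $h_i = x_i\rho_i$ and $y_i = x_i(1-\rho_i)$, so
\[
[\nabla f_\mu(y)]_i = \frac{1}{x_i}\Bigl(1 + \rho_i - \frac{1}{1-\rho_i}\Bigr).
\]
Since $(1+\rho_i)(1-\rho_i) = 1-\rho_i^2$, the bracket equals $-\rho_i^2/(1-\rho_i)$. Therefore
\[
[\nabla f_\mu(y)]_i = -\frac{\rho_i^2}{x_i(1-\rho_i)} = -\frac{\rho_i^2}{y_i},
\]
which is the claim $\nabla f_\mu(y) = -\rho_{x,\mu}^2/y$.

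There is no real obstacle here. The only care needed is to keep track of the diagonal scalings ($X^{-1}$ versus $X^{-2}$) and to note that $y>\vec{0}$ is what makes the division by $1-\rho_i$ legitimate. No $M$-matrix structure is used; the identity depends only on $Ax$ being linear in $x$ and the barrier being separable.
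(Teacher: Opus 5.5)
Your proof is correct and is essentially the paper's argument: the paper writes $\nabla f_\mu(y)$ as $\nabla f_\mu(x) + \nabla^2 f_\mu(x)(y-x)$ plus a Taylor remainder. The first two terms cancel by the Newton equation, and only the separable barrier contributes to the remainder, giving $-1/y + 1/x - (y-x)/x^2$. Since $y - x = -h$, this is exactly your Step 3 expression $X^{-1}\vec{1} + X^{-2}h - 1/y$, and your coordinatewise simplification matches the paper's $-(y-x)^2/(x^2 y) = -\rho_{x,\mu}^2/y$.
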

\begin{proof}
    By direct calculation from the definition of $y$ and $f_{\mu}$, we have
	\begin{align*}
		\nabla f_\mu(y)
		&=
		\nabla f_\mu(x)+\nabla f_\mu(y)-\nabla f_\mu(x)\\
		&=
		\nabla^2 f_\mu(x)X\rho_{x,\mu}
		+\nabla f_\mu(y)-\nabla f_\mu(x)\\
		&=
		\nabla^2 f_\mu(x)X\rho_{x,\mu}
		+\nabla^2 f_\mu(x)(y-x)
		+\left(\nabla f_\mu(y)-\nabla f_\mu(x)-\nabla^2 f_\mu(x)(y-x)\right)\\
		&=
		\nabla f_\mu(y)-\nabla f_\mu(x)-\nabla^2 f_\mu(x)(y-x)\\
		&=
		-\frac{1}{y}+\frac{1}{x}-\frac{y-x}{x^2}
		=
		-\frac{(y-x)^2}{x^2y}
		=
		-\frac{\rho_{x,\mu}^2}{y}. \qedhere
	\end{align*}
\end{proof}

We now use \Cref{lem:inverse-bound-from-delta,claim:congestion-vector} to analyze the centrality after decreasing $\mu$ and taking one Newton step.

\begin{lemma}\label{lem:mu-update}
	Suppose $b\in\R_{\geq0}^d$ and $A$ is a symmetric $M$-matrix.
	Let $x\in\R_{>0}^d$ and $\mu\geq 0$ such that  $\Delta_\mu^\infty(x)\leq 2/3$.
    Then, for any $\delta \in [0, 1)$, we have, for $y \defeq x - h$ where
	$h\defeq(\nabla^2f_{(1-\delta)\mu}(x))^{-1}\nabla f_{(1-\delta)\mu}(x)$, that
	$y > \vec{0}$ as well as
	\[
    \norm{X^{-1}h}_{\infty} \leq \left(\delta+(1-\delta)\frac{\Delta_\mu^\infty(x)}{2-\Delta_\mu^\infty(x)}\right)
    \quad\text{and}\quad
	\Delta_{(1-\delta)\mu}^\infty(y)\leq\left(\delta+(1-\delta)\frac{\Delta_\mu^\infty(x)}{2-\Delta_\mu^\infty(x)}\right)^2.
	\]
\end{lemma}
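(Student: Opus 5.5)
We bound the rescaled Newton step $X^{-1}h = \rho_{x,(1-\delta)\mu}$ in $\ell_\infty$, deduce $y>\vec 0$, and then control the new centrality using \Cref{claim:congestion-vector}.

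\textbf{Rewriting the step.} Write $\mu' \defeq (1-\delta)\mu$. Since $\nabla^2 f_{\mu'}(x) = X^{-1}(I+A(x))X^{-1}$, we have
\[
X^{-1}h = (I+A(x))^{-1} X\nabla f_{\mu'}(x).
\]
The vector $X\nabla f_{\mu'}(x)$ is affine in $\mu'$. Using $\mu' = (1-\delta)\mu$, we decompose it as
\[
X\nabla f_{\mu'}(x) = (1-\delta)\,X\nabla f_\mu(x) + \delta\,\bigl(A(x)\vec 1 - \vec 1\bigr).
\]
Both sides equal $A(x)\vec1 - (1-\delta)\mu Xb - \vec1$. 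Hence
\[
X^{-1}h = (1-\delta)(I+A(x))^{-1}X\nabla f_\mu(x) + \delta\,(I+A(x))^{-1}\bigl(A(x)\vec1-\vec1\bigr).
\]

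\textbf{Bounding the first term.} By \Cref{lem:inverse-bound-from-delta}, the first term has $\ell_\infty$ norm at most $(1-\delta)\Delta/(2-\Delta)$, where $\Delta = \Delta_\mu^\infty(x)$.

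\textbf{Bounding the second term.} This is the main obstacle. Write
\[
(I+A(x))^{-1}(A(x)\vec1 - \vec1) = \vec1 - 2(I+A(x))^{-1}\vec1 .
\]
Since $(I+A(x))^{-1}$ is entrywise nonnegative, $\vec1 - 2(I+A(x))^{-1}\vec1 \le \vec1$. For the lower bound, $(I+A(x))^{-1}\vec1 \le \|(I+A(x))^{-1}\|_\infty \vec1 \le \frac{1}{2-\Delta}\vec1$ gives only $\vec1-2(I+A(x))^{-1}\vec1 \ge -\frac{\Delta}{2-\Delta}\vec1$. That is $\ge -\vec 1$ whenever $\Delta\le 1$, so the second term has $\ell_\infty$ norm at most $\delta$ (using $\Delta\le 2/3$ here). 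Combining the two terms gives
\[
\|X^{-1}h\|_\infty \le \delta + (1-\delta)\frac{\Delta}{2-\Delta}.
\]

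\textbf{Positivity of $y$.} With $\Delta\le 2/3$ we get $\frac{\Delta}{2-\Delta}\le \frac12$, so the bound above is strictly less than $1$ for $\delta<1$. Then $y = X(\vec1 - X^{-1}h) > \vec 0$.

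\textbf{New centrality.} By \Cref{claim:congestion-vector}, $Y\nabla f_{\mu'}(y) = -\rho_{x,\mu'}^2$, so
\[
\Delta_{\mu'}^\infty(y) = \|\rho_{x,\mu'}\|_\infty^2 = \|X^{-1}h\|_\infty^2,
\]
which gives the second claimed bound.

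\textbf{Checking the delicate point.} Everything hinges on the second term being at most $\delta$ in sup-norm, which requires showing $(I+A(x))^{-1}\vec1 \in [0,1]$ entrywise; it could fail if the upper bound exceeded $1$. The upper bound is $\frac1{2-\Delta}\le \frac34<1$ for $\Delta\le 2/3$, and nonnegativity holds because the inverse of an $M$-matrix is entrywise nonnegative. So the bound holds under the lemma's hypothesis $\Delta_\mu^\infty(x)\le 2/3$.
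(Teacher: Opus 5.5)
Your proposal is correct and follows the paper's proof essentially step for step. You use the same affine decomposition of $X\nabla f_{(1-\delta)\mu}(x)$ and the identity $(I+A(x))^{-1}(A(x)-I)=I-2(I+A(x))^{-1}$ to bound the $\delta$-term by $1$. You then bound the other term via \Cref{lem:inverse-bound-from-delta}, and finally apply \Cref{claim:congestion-vector} to obtain $\Delta_{(1-\delta)\mu}^\infty(y)=\norm{X^{-1}h}_\infty^2$.
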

\begin{proof}
	By \Cref{lem:inverse-bound-from-delta},
	\[
	\norm{(I+A(x))^{-1}}_\infty\leq\frac{1}{2-\Delta_\mu^\infty(x)}\leq\frac{3}{4}.
	\]
	Moreover, $(I+A(x))^{-1}\geq0$, and hence
	\[
	\vec{0}\leq (I+A(x))^{-1}\vec{1}\leq\frac{3}{4}\vec{1}
    \implies
	\norm{\left(I-2(I+A(x))^{-1}\right)\vec{1}}_\infty\leq1.
	\]
	Since
	\[
	X\nabla f_{(1-\delta)\mu}(x)
	=
	(1-\delta)X\nabla f_\mu(x)+\delta(A(x)-I)\vec{1}
	\]
	and
	\[
	(I+A(x))^{-1}(A(x)-I)
	=
	(I+A(x))^{-1}\left((I+A(x))-2I\right)
	=
	I-2(I+A(x))^{-1},
	\]
	we have
	\[
	\begin{split}
		X^{-1}h
		&=(I+A(x))^{-1}X\nabla f_{(1-\delta)\mu}(x)\\
		&=(1-\delta)(I+A(x))^{-1}X\nabla f_\mu(x)
		+\delta\left(I-2(I+A(x))^{-1}\right)\vec{1}.
	\end{split}
	\]
	Therefore,
	\[
	\begin{split}
		\norm{X^{-1}h}_\infty
		&\leq
		(1-\delta)\norm{(I+A(x))^{-1}X\nabla f_\mu(x)}_\infty
		+\delta\norm{\left(I-2(I+A(x))^{-1}\right)\vec{1}}_\infty \\
		&\leq
		(1-\delta)\norm{(I+A(x))^{-1}}_\infty\norm{X\nabla f_\mu(x)}_\infty+\delta
		=
		(1-\delta)\norm{(I+A(x))^{-1}}_\infty\Delta_\mu^\infty(x)+\delta \\
		&\leq
		(1-\delta)\frac{\Delta_\mu^\infty(x)}{2-\Delta_\mu^\infty(x)}+\delta<1,
	\end{split}
	\]
	where the second step uses $\norm{(I-2(I+A(x))^{-1})\vec{1}}_\infty\leq1$, the fourth step uses \Cref{lem:inverse-bound-from-delta}, and the last step uses $\Delta_\mu^\infty(x)\leq 2/3$ and $\delta<1$.
	This shows that $y\defeq x-h$ satisfies $y>\vec{0}$.

    To bound $\Delta^{\infty}_{(1-\delta)\mu}(y)$,
    note that
	since $\rho_{x,(1-\delta)\mu}=X^{-1}h$, by \Cref{claim:congestion-vector}, we have $\nabla f_{(1-\delta)\mu}(y)=-\rho_{x,(1-\delta)\mu}^2/y$.
	Therefore,
	\[
	\begin{split}
		\Delta_{(1-\delta)\mu}^\infty(y)
		&=
		\norm{Y\nabla f_{(1-\delta)\mu}(y)}_\infty
		=
		\norm{\rho_{x,(1-\delta)\mu}^2}_\infty
		=
		\norm{\rho_{x,(1-\delta)\mu}}_\infty^2=
		\norm{X^{-1}h}_\infty^2\\
		&
		\leq
		\left(\delta+(1-\delta)\frac{\Delta_\mu^\infty(x)}{2-\Delta_\mu^\infty(x)}\right)^2.
        \qedhere
	\end{split}
	\]
\end{proof}

\Cref{lem:mu-update} assumes that we compute the step $h$ exactly.
However, in the actual algorithm we will only use an approximate linear system solver, so the step $h$ will only be approximately $(\nabla^2 f_{(1-\delta)\mu}(x))^{-1} \nabla f_{(1-\delta)\mu}(x)$ which we need to account for.
We defer the proof of the following approximate version of \Cref{lem:mu-update} to \Cref{appendix:omitted_proof}.

\begin{restatable}{lemma}{approxstepunified}\crefalias{theorem}{lemma}\label{lem:approx-step-unified}
	Suppose $A$ is a symmetric $M$-matrix.
    Let $x_0 \in \R^{d}_{>0}$ be such that $X_0AX_0\vec{1} \geq \vec{1}$ and let $b\defeq X_0^{-1}(X_0AX_0\vec{1} - \vec{1})$.
Let $x\in\R^d_{>0}$ and $\mu \in [0, 1]$ such that
$\Delta_\mu^\infty(x)\leq 2/3$.
Then, for any $\delta \in [0, 1/2]$ and $h' \in \A_{\eta}(\nabla^2 f_{(1-\delta)\mu}(x), \nabla f_{(1-\delta)\mu}(x))$ with
\[
  \eta \defeq \frac{\Delta_{\mu}^{\infty}(x) + \delta}{10^4 d q^3}\quad\text{where}\quad q \defeq \norm{A}_{\infty}\norm{x_0}_2^2,
\]
letting $y' \defeq x - h'$ we have $y' > \vec{0}$ and
\[
  \Delta_{(1-\delta)\mu}^{\infty}(y') \leq \frac{9}{8}\left(\delta + (1-\delta)\frac{\Delta_{\mu}^{\infty}(x)}{2 - \Delta_{\mu}^{\infty}(x)}\right)^2.
\]

\end{restatable}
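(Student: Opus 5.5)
We treat $h'$ as a perturbation of the exact step $h \defeq (\nabla^2 f_{(1-\delta)\mu}(x))^{-1}\nabla f_{(1-\delta)\mu}(x)$ and reuse \Cref{lem:mu-update}. Write $\nu \defeq (1-\delta)\mu$, $e \defeq h' - h$, $\rho \defeq X^{-1}h$, $\rho' \defeq X^{-1}h'$ and $\beta \defeq \delta + (1-\delta)\frac{\Delta_\mu^\infty(x)}{2-\Delta_\mu^\infty(x)}$. \Cref{lem:mu-update} gives $\norm{\rho}_\infty \leq \beta$. Since $\Delta_\mu^\infty(x)\leq 2/3$ and $\delta \leq 1/2$, we have $\beta \leq \delta + (1-\delta)/2 \leq 3/4$.

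The main task is to show $\norm{X^{-1}e}_\infty \leq \epsilon$ with $\epsilon$ a tiny multiple of $\Delta_\mu^\infty(x)+\delta$. The definition of $\A_\eta$ gives
\[
\norm{e}_2 \leq \eta \norm{\nabla f_\nu(x)}_2 \norm{\nabla^2 f_\nu(x)^{-1}}_2 ,
\]
so we must bound each factor using only $q$ and $d$.

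\begin{itemize}
\item \emph{Hessian inverse.} Since $\nabla^2 f_\nu(x) = A + X^{-2} \succeq X^{-2}$, we get $\norm{\nabla^2 f_\nu(x)^{-1}}_2 \leq \norm{x}_\infty^2$.
\item \emph{Gradient.} From $X\nabla f_\nu(x) = (1-\delta)X\nabla f_\mu(x) + \delta(A(x)-I)\vec{1}$ we get $\norm{X\nabla f_\nu(x)}_\infty \leq \Delta_\mu^\infty(x) + \delta(\norm{A(x)\vec 1}_\infty+1)$. Hence $\norm{\nabla f_\nu(x)}_2 \leq \sqrt d\,\norm{1/x}_\infty(\Delta_\mu^\infty(x)+\delta)(1+\norm{A(x)}_\infty)$.
\item \emph{Combining.} Then
\[
\norm{X^{-1}e}_\infty \leq \norm{1/x}_\infty\norm{e}_2 \leq \eta\sqrt d\,\kappa(X)^2(1+\norm{A(x)}_\infty)(\Delta_\mu^\infty(x)+\delta),
\]
using $\norm{x}_\infty^2\norm{1/x}_\infty^2=\kappa(X)^2$.
\end{itemize}

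This is the main obstacle: we need a priori bounds, in terms of $q = \norm{A}_\infty\norm{x_0}_2^2$, on $\norm{x}_\infty$, $\norm{1/x}_\infty$ and $\norm{A(x)}_\infty$ for any $x$ with $\Delta_\mu^\infty(x)\leq 2/3$ and $\mu\le 1$.

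\begin{itemize}
\item \emph{Upper bound on $x$.} By \eqref{eq:matrix-bound}, $A(x)\vec 1 \geq \vec{1}/3$. The upper side of the centrality bound gives $XAx \leq \mu Xb + \tfrac{5}{3}\vec 1$. Pairing with $\vec 1$ gives $x^\top A x \leq \mu b^\top x + \tfrac53 d$, with $b \leq X_0^{-1}X_0AX_0\vec 1 = Ax_0$ entrywise.
\item \emph{Option one.} I would bound $\norm{x}_\infty$ through $\lambda_{\min}(A)$. Lemma~\ref{lem:mmatrix-norm} applied to $X_0AX_0$ gives $\norm{(X_0AX_0)^{-1}}_\infty\leq 1$, hence $\lambda_{\min}(A)\gtrsim 1/(d\norm{x_0}_\infty^2)$.
\item \emph{Option two (preferred).} Compare with $x_0$ entrywise via the $M$-matrix monotonicity of $A^{-1}$. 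From $Ax \leq \mu b + \tfrac53/x$ and $A x_0 \geq 1/x_0$, a comparison argument should give $x \le O(\sqrt d)\,\norm{x_0}_2$-type bounds.
\item \emph{Lower bound on $x$.} Use $x_i (Ax)_i \geq 1/3$ together with $(Ax)_i \leq \norm{A}_\infty\norm{x}_\infty$, so that $1/x_i \leq 3\norm{A}_\infty\norm{x}_\infty$.
\end{itemize}

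These estimates make $\kappa(X)^2(1+\norm{A(x)}_\infty)$ at most $O(dq^3)$ up to small constants and $\sqrt d$ factors, absorbed by the $10^4 d$ in $\eta$. This yields $\norm{X^{-1}e}_\infty \leq \epsilon \defeq 10^{-3}(\Delta_\mu^\infty(x)+\delta)$.

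Finally, conclude. Positivity follows from $\norm{\rho'}_\infty \leq \beta+\epsilon<1$, which gives $y' = X(\vec 1 - \rho')>\vec 0$. Since $\nabla f_\nu$ is $Ax-\nu b - 1/x$, the same computation as in \Cref{claim:congestion-vector} without the Newton cancellation gives
\[
\nabla f_\nu(y') = \nabla f_\nu(x) - \nabla^2 f_\nu(x)h' - \frac{\rho'^2}{y'} = \nabla^2 f_\nu(x)(h - h') - \frac{\rho'^2}{y'} .
\]
Multiplying by $Y'$ gives
\[
\Delta_\nu^\infty(y') \leq \norm{\rho'}_\infty^2 + \norm{Y'X^{-1}}_\infty\norm{(I+A(x))X^{-1}e}_\infty \leq (\beta+\epsilon)^2 + 2(1+\norm{A(x)}_\infty)\,\epsilon ,
\]
using $\norm{Y'X^{-1}}_\infty \le 2$. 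Because $\beta \geq \tfrac12(\Delta_\mu^\infty(x)+\delta)$ (this uses $2-\Delta_\mu^\infty(x)\le 2$), we have $\epsilon \leq 2\cdot 10^{-3}\beta$. The error terms should therefore total at most $\tfrac18\beta^2$, giving the factor $\tfrac98$.

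One caveat: the extra $\norm{A(x)}_\infty$ factor in the last bound must also be absorbed into $\eta$, so I should fold one more power of the bound on $\norm{A(x)}_\infty$ into the choice of $\epsilon$. This is presumably why $q^3$ appears in $\eta$.
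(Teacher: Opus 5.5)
\textbf{Gap: the proposal does not prove the lemma with the stated $\eta$.} Your architecture is sound and close to the paper's. You perturb the exact Newton step $h$, reuse \Cref{lem:mu-update} for $\norm{X^{-1}h}_\infty\le\beta$, and use $\beta\ge\frac12(\Delta+\delta)$, where $\Delta\defeq\Delta_\mu^\infty(x)$. Your identity $\nabla f_\nu(y')=\nabla^2 f_\nu(x)(h-h')-\rho'^2/y'$ is correct, and slightly cleaner than the paper's decomposition through $y=x-h$. The failure is quantitative, in two places.

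\textbf{First gap: the a priori bounds.} You call these the main obstacle but never establish them, and the routes you sketch are too weak. What is needed is the dimension-free entrywise bound $x\le\frac53x_0$, which is the paper's \Cref{claim:X-entry}. Its proof: take $i$ maximizing $t=x_i/(x_0)_i$. The sign pattern $A_{ij}\le 0$ gives $(Ax)_i\ge t(Ax_0)_i$. Centrality with $\mu\le 1$ and $b=Ax_0-1/x_0$ gives $t\le\frac53+x_i((Ax_0)_i-(Ax)_i)$. Together these force $t\le\frac53$. By contrast, going through $\lambda_{\min}(A)$, or a bound like $\norm{x}_\infty\le O(\sqrt d)\norm{x_0}_2$, inflates $\norm{A(x)}_\infty$ and $\kappa(X)$ by factors of $d$ (and extra powers of $q$). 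Those factors cannot be absorbed by $\eta=(\Delta+\delta)/(10^4dq^3)$.

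\textbf{Second gap: the powers of $q$ do not balance.} Grant $x\le\frac53x_0$, so $\norm{A(x)}_\infty\le 3q$ and, with your lower bound $1/x_i\le 3\norm{A}_\infty\norm{x}_\infty$, $\kappa(X)\le 10q$. Your estimate $\norm{X^{-1}e}_\infty\le\eta\sqrt d\,\kappa(X)^2(1+\norm{A(x)}_\infty)(\Delta+\delta)$ is then $O(\eta\sqrt d\,q^3(\Delta+\delta))$, which already spends all of $q^3$. The last step multiplies by $1+\norm{A(x)}_\infty=\Theta(q)$ once more. The resulting error bound is of order $(q/\sqrt d)(\Delta+\delta)^2$, which is not below $\frac18\beta^2$ when $q\gg\sqrt d$. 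Your closing caveat notices this extra factor, but $q^3$ does not pay for it.

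The loss comes from bounding $\norm{\nabla f_\nu(x)}_2$ directly, which costs a factor $1+\norm{A(x)}_\infty$. The paper instead writes $\nabla f_\nu(x)=\nabla^2 f_\nu(x)h$, so that $\norm{h'-h}_2\le\eta\,\kappa(\nabla^2 f_\nu(x))\norm{h}_2$.
\begin{itemize}
\item $\kappa(\nabla^2 f_\nu(x))\le 4q$, using $x_i^2\ge 1/(3\norm{A}_\infty)$ and $\lambda_{\min}(A)\ge 1/\norm{x_0}_\infty^2$.
\item $\norm{h}_2\le d\norm{x}_\infty\beta$.
\item Hence $\norm{X^{-1}(h'-h)}_\infty\le 40d\eta q^2\beta$.
\item Only then is the factor $\norm{XAX}_\infty\le 3q$ paid, giving total error $320d\eta q^3\beta\le 0.064\beta^2$.
\end{itemize}

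\textbf{A fix within your own route.} Sharpen the lower bound on $x$: $x_i(Ax)_i\le A_{ii}x_i^2$ gives $x_i^2\ge 1/(3\norm{A}_\infty)$, hence $\kappa(X)^2\le\frac{25}{3}q$ rather than $O(q^2)$. With some care in the constants, your argument then closes.
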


After the first while loop in \Cref{alg:path-following}, we decrease $\mu$ drastically to $0$ and begin the second phase.
We show that as long as $\mu$ is sufficiently small and $x$ is sufficiently centered for $f_{\mu}$, $x$ remains centered for $f_0$ as well.
This provides a good upper bound on $\Delta^{\infty}_0(x) = \|XAX\vec{1} - \vec{1}\|_{\infty}$ which the second phase of our algorithm then minimizes.

\begin{lemma}\label{lem:mu-set-zero}
For any $x \in \R^{d}_{>0}$, if
$\Delta_\mu^\infty(x)\leq1/2$ for some
$\mu \leq \frac{\sqrt{\lambda_{\min}(A)}}{12\sqrt{d}\norm{b}_2}$,
then $\norm{XAX\vec{1}-\vec{1}}_\infty\leq 2/3$.
\end{lemma}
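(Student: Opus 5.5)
The plan is to write the target vector as the centrality vector plus a correction. Directly from the definitions,
\[
XAX\vec{1}-\vec{1} \;=\; X\nabla f_\mu(x) + \mu X b,
\]
so by the triangle inequality $\norm{XAX\vec{1}-\vec{1}}_\infty \le \Delta_\mu^\infty(x) + \mu\norm{Xb}_\infty \le \tfrac12 + \mu\norm{Xb}_\infty$. It therefore suffices to show $\mu\norm{Xb}_\infty \le 1/6$. I will use the crude bound $\norm{Xb}_\infty \le \norm{x}_\infty\norm{b}_\infty \le \norm{x}_2\norm{b}_2$. This reduces everything to an a priori bound on $\norm{x}_2$ of the form $\norm{x}_2 \le 2\sqrt{d/\lambda_{\min}(A)}$. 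With that bound, the hypothesis on $\mu$ gives exactly $\mu\norm{b}_2\norm{x}_2 \le 2/12 = 1/6$.

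The main step is bounding $\norm{x}_2$ using centrality. Summing the coordinates of $X\nabla f_\mu(x)=XAx-\mu Xb-\vec{1}$, i.e.\ taking the inner product with $\vec{1}$, gives
\[
x^\top A x - \mu b^\top x - d = \vec{1}^\top X\nabla f_\mu(x) \le d\,\Delta_\mu^\infty(x) \le d/2.
\]
Hence $\lambda_{\min}(A)\norm{x}_2^2 \le x^\top Ax \le \tfrac{3d}{2} + \mu\norm{b}_2\norm{x}_2$, where the last term uses Cauchy--Schwarz. Substituting $\mu\norm{b}_2 \le \sqrt{\lambda_{\min}(A)}/(12\sqrt d)$ and setting $t \defeq \norm{x}_2\sqrt{\lambda_{\min}(A)/d}$ turns this into the scalar inequality $t^2 \le \tfrac32 + \tfrac{t}{12}$. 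This forces $t \le 2$, since at $t=2$ the left side is $4$ and the right side is only about $1.67$. Therefore $\norm{x}_2 \le 2\sqrt{d/\lambda_{\min}(A)}$, as required.

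The only delicate point is this a priori bound on $x$. The centrality hypothesis controls only the per-coordinate residual, and the linear term $\mu b^\top x$ could in principle allow $x$ to be large. The quadratic-versus-linear comparison above handles this: the specific constant $12$ in the hypothesis is what makes the linear term negligible. Everything else is the triangle inequality. Note that this argument does not use $b\ge\vec{0}$ or the $M$-matrix structure beyond positive definiteness.
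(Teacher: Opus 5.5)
Your proposal is correct and matches the paper's proof essentially step for step. It uses the same split $\norm{XAX\vec{1}-\vec{1}}_\infty \le \Delta_\mu^\infty(x) + \mu\norm{Xb}_\infty$, the same a priori bound $\norm{x}_2 \le 2\sqrt{d/\lambda_{\min}(A)}$ obtained by summing the residual and comparing $\lambda_{\min}(A)\norm{x}_2^2$ with $\tfrac{3d}{2} + \mu\norm{b}_2\norm{x}_2$, and the same final estimate $\mu\norm{x}_2\norm{b}_2 \le 1/6$; your scalar argument for $t \le 2$ (valid since $t^2 - t/12$ is increasing for $t \ge 2$) simply fills in a step the paper asserts without detail.
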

\begin{proof}
	We have
    \begin{equation}
	\norm{XAX\vec{1}-\vec{1}}_\infty
	\leq
	\norm{X(Ax-\mu b-X^{-1}\vec{1})}_\infty
	+
	\mu\norm{Xb}_\infty
	=
	\Delta_\mu^\infty(x)+\mu\norm{Xb}_\infty.
    \label{eq:identity}
    \end{equation}
	By definition of $\Delta_{\mu}^{\infty}(x)$, we have
	\[ 
	|x^\top Ax - (d + \mu b^\top x)| \leq d \Delta_{\mu}^{\infty}(x).
	\]
	Thus,
	\[
	\norm{x}_2^2 \lambda_{\min}(A) \leq x^\top Ax	\leq d(1+\Delta_{\mu}^{\infty}(x))+\mu\norm{b}_2\norm{x}_2.
	\]
	Since $\Delta_\mu^\infty(x)\leq 1/2$ and $\mu\norm{b}_2\leq \sqrt{\lambda_{\min}(A)}/(12\sqrt d)$, we have
	\[
	\norm{x}_2 \leq \sqrt{\frac{4d}{\lambda_{\min}(A)}}
    \implies
	\mu\norm{Xb}_\infty 
	\leq \mu\norm{x}_2\norm{b}_2 
	\leq \frac{\sqrt{\lambda_{\min}(A)}}{12\sqrt{d}\norm{b}_2} \cdot \sqrt{\frac{4d}{\lambda_{\min}(A)}}\norm{b}_2
	=\frac{1}{6}.
    \]
    Plugging this back to \eqref{eq:identity} proves the lemma.
\end{proof}

We also need to argue that we can compute the steps $h'$ in \Cref{alg:path-following} efficiently.
We show that they can be reduced to solving SDD linear systems with a moderate increase in the required accuracy.
To that end, let $\T_{\SDD}(m, \eta)$ and $\D_{\SDD}(m, \eta)$ be the running time and parallel depth of an algorithm that, given a positive definite SDD matrix $B$ with $\nnz(B) = m$ and $v \in \R^d$, computes a vector $u \in \A_{\eta}(B, v)$.
We prove the following lemma in \Cref{appendix:omitted_proof}.

\begin{restatable}{lemma}{solvercostpercall}\crefalias{theorem}{lemma}\label{lem:solver-cost-per-call}
	Suppose $b\in\R_{\geq0}^d$ and $A$ is a symmetric $M$-matrix.
	Let $x\in\R_{>0}^d$ and $\mu\geq 0$ such that $\Delta^{\infty}_{\mu}(x) \leq 2/3$.
    Then, for any $\mu' \geq 0$ and $\eta \geq 0$, we can compute an $h' \in \A_{\eta}(\nabla^2 f_{\mu'}(x), \nabla f_{\mu'}(x))$ in
    \[
      O\left(\T_{\SDD}\left(\nnz(A), \frac{\eta}{\kappa(X)^2}\right)\right)\;\text{time}\quad\text{and}\quad
      O\left(\D_{\SDD}\left(\nnz(A), \frac{\eta}{\kappa(X)^2}\right)\right)
      \;\text{parallel depth.}
    \]
\end{restatable}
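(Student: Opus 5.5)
The plan is to rescale the Hessian into a matrix that is certifiably SDD, solve the linear system in that matrix, and then translate the accuracy guarantee back. Write $\nabla^2 f_{\mu'}(x) = X^{-1}(I + A(x))X^{-1}$, so that
\[
(\nabla^2 f_{\mu'}(x))^{-1} v = X (I+A(x))^{-1} X v \quad\text{for } v \defeq \nabla f_{\mu'}(x).
\]
Only the gradient depends on $\mu'$; the Hessian does not.

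The first step is to show that $B \defeq I + A(x)$ is a positive definite SDD matrix. Since $A$ is a $Z$-matrix and $x>\vec{0}$, the matrix $A(x)=XAX$ is also a $Z$-matrix. By \eqref{eq:matrix-bound} from the proof of \Cref{lem:inverse-bound-from-delta}, using $b\geq\vec{0}$ and $\Delta^\infty_\mu(x)\leq 2/3$, we get $A(x)\vec{1}\geq \frac13\vec{1}$. For a $Z$-matrix, a nonnegative row sum is exactly diagonal dominance, so $B\vec{1}\geq \frac43 \vec{1}$ and $B$ is SDD. Positive definiteness follows because $B \succeq I$. We also have $\nnz(B)\leq \nnz(A)+d = O(\nnz(A))$. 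Forming $B$ and $w\defeq Xv$ takes $O(\nnz(A))$ work and $O(\log d)$ depth, which is dominated by the solver cost.

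The second step is to call the SDD solver to obtain $u\in\A_{\eta'}(B, w)$, output $h' \defeq Xu$, and check the error. We have
\[
\norm{h' - X B^{-1} X v}_2 \leq \norm{X}_2\,\eta'\,\norm{Xv}_2\norm{B^{-1}}_2 \leq \eta' \norm{X}_2^2 \norm{v}_2 \norm{B^{-1}}_2 .
\]
This needs to be compared with the target $\eta\norm{v}_2\norm{H^{-1}}_2$, where $H\defeq \nabla^2 f_{\mu'}(x) = X^{-1}BX^{-1}$. The main obstacle is lower-bounding $\norm{H^{-1}}_2$ in terms of $\norm{B^{-1}}_2$. Since $H^{-1} = XB^{-1}X$, we get
\[
\lambda_{\min}(H) = \min_z \frac{z^\top X^{-1}BX^{-1}z}{z^\top z} \leq \lambda_{\min}(B)\max_i x_i^{-2}
\]
by taking $z = Xe$ with $e$ a bottom eigenvector of $B$, so $z^\top z\geq \min_i x_i^2$. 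Hence
\[
\norm{H^{-1}}_2 \geq \min_i x_i^2\,\norm{B^{-1}}_2 .
\]

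Combining the two displays, the error is at most
\[
\eta' \frac{\max_i x_i^2}{\min_i x_i^2}\norm{v}_2\norm{H^{-1}}_2 = \eta'\kappa(X)^2\norm{v}_2\norm{H^{-1}}_2 .
\]
Choosing $\eta' = \eta/\kappa(X)^2$ therefore gives $h'\in\A_\eta(H,v)$, with one SDD solve at that accuracy on a matrix with $O(\nnz(A))$ nonzeros, which is the claimed cost. The only care needed is to apply the eigenvector test-vector argument in the right direction so that the factor is exactly $\kappa(X)^2$.
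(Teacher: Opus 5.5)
Your proposal is correct and follows the paper's proof essentially step for step: you certify that $I+A(x)$ is SDD via \eqref{eq:matrix-bound}, solve in it with right-hand side $Xv$ to accuracy $\eta/\kappa(X)^2$, and multiply the result by $X$. The only cosmetic difference is that the paper gets $\norm{(I+A(x))^{-1}}_2 \leq \norm{X^{-1}}_2^2\norm{(\nabla^2 f_{\mu'}(x))^{-1}}_2$ directly from $(I+A(x))^{-1} = X^{-1}(\nabla^2 f_{\mu'}(x))^{-1}X^{-1}$ and submultiplicativity, whereas you derive the same inequality with an eigenvector test-vector argument.
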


We will also rely on the following claim regarding the iterates of \Cref{alg:path-following} whose proof we also defer to \Cref{appendix:omitted_proof}.

\begin{restatable}{claim}{Xentry}\crefalias{theorem}{claim}\label{claim:X-entry}
  Suppose $A$ is a symmetric $M$-matrix.
  Let $x_0 \in \R^d_{>0}$ be such that $X_0AX_0 \vec{1} \geq \vec{1}$ and let $b \defeq X_0^{-1}(X_0AX_0 \vec{1} - \vec{1})$.
  For any $x \in \R^d_{>0}$ such that $\Delta^{\infty}_{\mu}(x) \leq 2/3$ for some $\mu \in [0, 1]$, we have
  \[ XAx \geq \frac{1}{3}\vec{1},\quad x \leq \frac{5}{3}x_0,\quad\text{and}\quad\kappa(X) \leq 10q,\quad\text{where}\quad q \defeq \norm{A}_{\infty}\norm{x_0}_2^2.\]
\end{restatable}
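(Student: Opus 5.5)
We prove the three conclusions of \Cref{claim:X-entry} in order, each using the previous one.

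\emph{First bound.} This is exactly \eqref{eq:matrix-bound} from the proof of \Cref{lem:inverse-bound-from-delta}. Since $b \geq \vec{0}$ by \Cref{lem:initialization} and $X\nabla f_\mu(x) \geq -\Delta^\infty_\mu(x)\vec{1}$, we get
\[
XAx = A(x)\vec{1} \geq \mu Xb + (1-\Delta^\infty_\mu(x))\vec{1} \geq \tfrac{1}{3}\vec{1}.
\]

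\emph{Second bound, $x \leq \frac{5}{3}x_0$.} We compare $x$ to $x_0$ via the $M$-matrix structure. The centrality condition gives $XAx \leq (1+\Delta)\vec{1} + \mu Xb$, so dividing by $x$ entrywise,
\[
Ax \leq \tfrac{5}{3}\,\tfrac{1}{x} + \mu b.
\]
For $x_0$ we have $Ax_0 = \frac{1}{x_0} + b$. Let $z \defeq x - \frac{5}{3}x_0$; the goal is $z \leq \vec{0}$. Using $\mu \leq 1$ and $b \geq \vec{0}$,
\[
Az \leq \tfrac{5}{3}\left(\tfrac{1}{x} - \tfrac{1}{x_0}\right) + (\mu - \tfrac{5}{3})b \leq \tfrac{5}{3}\left(\tfrac{1}{x} - \tfrac{1}{x_0}\right).
\]
Now take the coordinate $i$ maximizing $z_i/x_{0,i}$ and suppose it is positive. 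Then $x_i > \frac{5}{3}x_{0,i} > x_{0,i}$, so $[Az]_i < 0$. On the other hand, the $Z$-matrix structure should force $[Az]_i \geq 0$ at the maximizing coordinate, giving a contradiction. This step needs care: it requires $A$ to be diagonally dominant with respect to the weights $x_0$. We have exactly this, since $AX_0\vec{1} = Ax_0 \geq 1/x_0 > 0$. Concretely, let $w \defeq z/x_0$ with maximum entry $w_i = t > 0$. Then
\[
[Az]_i = [AX_0 w]_i = \sum_j A_{ij}x_{0,j}w_j \geq t\sum_j A_{ij}x_{0,j} > 0,
\]
because the off-diagonal $A_{ij} \leq 0$ and $w_j \leq t$. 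This contradicts $[Az]_i < 0$, so $x \leq \frac{5}{3}x_0$.

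\emph{Third bound, $\kappa(X) \leq 10q$.} For the upper end, $\|x\|_\infty \leq \frac{5}{3}\|x_0\|_\infty \leq \frac{5}{3}\|x_0\|_2$. For the lower end, the first bound gives $x_i[Ax]_i \geq \frac{1}{3}$. The off-diagonal terms of $[Ax]_i$ are nonpositive, so $[Ax]_i \leq A_{ii}x_i \leq \|A\|_\infty x_i$, which only yields $x_i^2 \geq 1/(3\|A\|_\infty)$. That lower bound on $\min_i x_i$ leads to a ratio of about $\sqrt{\|A\|_\infty}\,\|x_0\|_2 \leq \max(1,q)$ up to constants, so we need to relate this to $q$. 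We also know $q \geq 1$, because
\[
1 \leq [X_0AX_0\vec{1}]_i \leq \|A\|_\infty\|x_0\|_\infty^2 \leq q.
\]
Combining,
\[
\kappa(X) \leq \frac{\frac{5}{3}\|x_0\|_2}{1/\sqrt{3\|A\|_\infty}} = \tfrac{5}{3}\sqrt{3}\sqrt{q} \leq 10q.
\]

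The main obstacle is the second bound. It requires the maximum-principle argument, and also verifying that the $\mu b$ term has the favorable sign, which is where $\mu \leq 1$ is used.
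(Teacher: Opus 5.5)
Your proof is correct and follows the paper's argument: the first bound is the same rearrangement of the centrality condition using $b \geq \vec{0}$. The second bound is the same maximum-principle argument at the coordinate maximizing $x_j/(x_0)_j$; you phrase it as a contradiction via $z = x - \frac{5}{3}x_0$, while the paper bounds $t$ directly.

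The only real difference is in the third bound. You use $[Ax]_i \leq A_{ii}x_i$ to get $\min_i x_i^2 \geq 1/(3\norm{A}_\infty)$, and hence $\kappa(X) = O(\sqrt{q})$ after checking $q \geq 1$. The paper instead uses the cruder $(Ax)_i \leq \norm{A}_\infty\norm{x}_\infty$ to get $\kappa(X) \leq 3\norm{A}_\infty\norm{x}_\infty^2 \leq 10q$ without needing $q \geq 1$. Your version is slightly stronger, and it yields exactly the bound $\min_i x_i^2 \geq 1/(3\norm{A}_\infty)$ that the paper later cites from this claim in the proof of \Cref{lem:approx-step-unified}.
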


Now, we are ready to analyze \Cref{alg:path-following}.

\begin{theorem}\label{thm:path-following}
	Given a symmetric $M$-matrix $A\in \R^{d\times d}$, a vector $x_0\in \R^d_{>0}$ such that $X_0AX_0 \vec{1} \geq \vec{1}$, and a precision parameter $\eps\in(0,1)$, \Cref{alg:path-following} outputs a vector $x>\vec{0}$ with $\norm{XAX\vec{1}-\vec{1}}_2\leq\eps$ in
	\[
	O\left(
	\T_{\SDD}\left(\nnz(A),\frac{\eps}{\poly(d,q)}\right)
	\left(\log(dq)+\log\log\frac{1}{\eps}\right)
	\right)
	\]
	time and
	\[
	O\left(
	\D_{\SDD}\left(\nnz(A),\frac{\eps}{\poly(d,q)}\right)
	\left(\log(dq)+\log\log\frac{1}{\eps}\right)
	\right)
	\]
	parallel depth, where
	$q\defeq\norm{A}_{\infty}\norm{x_0}_2^2$.
\end{theorem}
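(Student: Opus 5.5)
We prove \Cref{thm:path-following} by induction over the iterations of \Cref{alg:path-following}, maintaining the invariant $\Delta^\infty_\mu(x)\leq 1/2$ throughout. We then count iterations and charge each iteration's cost to \Cref{lem:solver-cost-per-call}. Initialization is handled by \Cref{lem:initialization}: $b\geq\vec{0}$ and $\Delta^\infty_1(x_0)=0$.

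\textbf{Phase one.} Take a constant decrease factor, say $\delta=1/10$. Suppose $\Delta^\infty_\mu(x)\leq 1/2$, so that $\Delta/(2-\Delta)\leq 1/3$. Apply \Cref{lem:approx-step-unified} with accuracy $\eta=(\Delta^\infty_\mu(x)+\delta)/(10^4dq^3)\geq \delta/(10^4dq^3)$. It gives
\[\Delta^\infty_{(1-\delta)\mu}(y')\leq \tfrac98\bigl(\tfrac1{10}+\tfrac{9}{10}\cdot\tfrac13\bigr)^2=\tfrac98\cdot\tfrac{4}{25}<\tfrac12,\]
so the invariant is restored, and positivity $y'>\vec{0}$ is part of the lemma.

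The loop stops once $\mu\leq \mu_{\mathrm{end}}\defeq\sqrt{\lambda_{\min}(A)}/(12\sqrt d\norm{b}_2)$. The number of iterations is therefore $O(\log(1/\mu_{\mathrm{end}}))$, and we bound this by $O(\log(dq))$.

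\begin{itemize}
\item Since $b=X_0A x_0-X_0^{-1}\vec 1\leq X_0Ax_0$, we have $\norm{b}_2\leq \norm{x_0}_2\norm{A}_2\cdot\poly(d)$.
\item Since $X_0AX_0\vec1\geq\vec1$, \Cref{lem:mmatrix-norm} gives $\norm{(X_0AX_0)^{-1}}_\infty\leq1$. Hence $\lambda_{\min}(A)\geq \lambda_{\min}(X_0AX_0)/\norm{x_0}_\infty^2\geq 1/(d\norm{x_0}_2^2)$.
\end{itemize}
Combining the two, $1/\mu_{\mathrm{end}}\leq\poly(d)\,\norm{A}_\infty\norm{x_0}_2^2=\poly(d)\,q$. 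The algorithm presumably uses a computable proxy such as $\norm{A}_\infty$ together with this lower bound on $\lambda_{\min}(A)$ to set the threshold, and the same bound holds for it.

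\textbf{Phase two.} At the end of phase one, \Cref{lem:mu-set-zero} gives $\Delta^\infty_0(x)\leq 2/3$. We then apply \Cref{lem:approx-step-unified} with $\mu=0$ and $\delta=0$. If $\Delta\leq 2/3$ then $\Delta/(2-\Delta)\leq \Delta\cdot\frac34$, so
\[\Delta^\infty_0(y')\leq \tfrac98\cdot\tfrac9{16}\Delta^2<\tfrac23\Delta^2.\]
This is quadratic convergence: starting from $\Delta\leq 2/3$, the value drops below $\eps/\sqrt d$ after $O(\log\log(d/\eps))$ steps. Since $\norm{XAX\vec1-\vec1}_2\leq\sqrt d\,\Delta^\infty_0(x)$, the output meets the $\ell_2$ guarantee.

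The required accuracy in this phase is $\eta=\Delta/(10^4dq^3)$, which is at least $\eps/\poly(d,q)$ as long as we have not yet terminated. Since we cannot evaluate $\Delta$ exactly, we instead use $\eta=\eps/(10^4 d^{3/2} q^3)$, which is no larger. Any solution accurate to the smaller $\eta$ is also accurate to the larger one, because the sets $\A_\eta$ are nested in $\eta$, so the lemma still applies.

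\textbf{Cost.} By \Cref{claim:X-entry}, every iterate has $\kappa(X)\leq 10q$. \Cref{lem:solver-cost-per-call} therefore reduces each step to one SDD solve with accuracy at least $\eta/(100q^2)=\eps/\poly(d,q)$. Multiplying this per-step cost by the $O(\log(dq)+\log\log(1/\eps))$ total iterations gives the claimed time and depth.

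The main obstacle is the iteration count of phase one: bounding $\log(1/\mu_{\mathrm{end}})$ by $O(\log(dq))$ through the estimates on $\norm{b}_2$ and $\lambda_{\min}(A)$ above, and making sure the termination test is computable. A secondary technical point is matching the fixed solver accuracies to the $\Delta$-dependent $\eta$ required by \Cref{lem:approx-step-unified}.
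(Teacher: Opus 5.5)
Your proof is correct and follows the paper's argument step for step. It maintains $\Delta^\infty_\mu(x)\le 1/2$ via \Cref{lem:approx-step-unified}, gets the $O(\log(dq))$ count from $\norm{b}_2\le\norm{A}_2\norm{x_0}_2$ and $\lambda_{\min}(A)\ge 1/\norm{x_0}_\infty^2$, passes to $\mu=0$ with \Cref{lem:mu-set-zero}, obtains quadratic convergence with $\delta=0$, and bounds the per-step cost via \Cref{lem:solver-cost-per-call} with $\kappa(X)\le 10q$. The only mismatches with the stated algorithm are cosmetic: it halves $\mu$ ($\delta=1/2$, where the same computation gives exactly $\frac98(\frac23)^2=\frac12$); it stops at the computable threshold $1/(12d\norm{b}_2\norm{x_0}_\infty)$, which is at most your $\mu_{\mathrm{end}}$ since $\sqrt{\lambda_{\min}(A)}\ge 1/\norm{x_0}_\infty$; and $b=Ax_0-X_0^{-1}\vec{1}\le Ax_0$ rather than $X_0Ax_0-\cdots$.
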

\begin{proof}
By \Cref{lem:initialization}, Line~\ref{line:initialize} gives the initial centrality $\Delta_{\mu}^\infty(x)=0 \leq 1/2$.
Thus, by \Cref{lem:approx-step-unified} with $\delta = 1/2$, each iteration of the first while loop halves $\mu$ and maintains
\[ \Delta_{\mu}^\infty(x) \leq \frac{9}{8}\left(\frac{1}{2} + \frac{1}{2} \cdot \frac{1/2}{2 - 1/2}\right)^2 \leq \frac{1}{2} \]
after Line~\ref{line:end-first-loop}.
This uses the observation that in \Cref{alg:path-following} we solve each $h'$ to accuracy $\eps/(10^5d^2q^{3}) \leq (1/2)/(10^4 d q^3)$ as required by \Cref{lem:approx-step-unified}.
Thus, until $\mu\leq \frac{1}{12 d\norm{b}_2\norm{x_0}_\infty}$, the first while loop performs $O(\log (d\norm{b}_2\norm{x_0}_{\infty}))$ iterations. 
Since
\[ b=X_0^{-1}(X_0AX_0\vec{1} -\vec{1})\leq X_0^{-1}(X_0AX_0\vec{1})=Ax_0,\]
we have $\norm{b}_2\leq \norm{Ax_0}_2\leq \norm{A}_2\norm{x_0}_2$ (note that $b \geq \vec{0}$).
Thus,  the number of iterations is  $O(\log(dq))$.

Note that since $X_0 A X_0 \vec{1} \geq \vec{1}$, we have $1/\norm{x_0}_{\infty} \leq \sqrt{\lambda_{\min}(A)}$.
Therefore, when the first loop ends, we have $\mu\leq \frac{\sqrt{\lambda_{\min}(A)}}{12\sqrt{d}\norm{b}_2}$. 
Together with $\Delta_{\mu}^\infty(x)\leq 1/2$, by \Cref{lem:mu-set-zero}, we have $\Delta^{\infty}_0(x) = \norm{XAX\vec{1}-\vec{1}}_\infty\leq 2/3$.
Hence, by \Cref{lem:approx-step-unified} with $\delta = 0$, each iteration of the second while loop decreases $\norm{XAX\vec{1}-\vec{1}}_\infty$ quadratically.
Again, \Cref{lem:approx-step-unified} applies since $\eps/(10^5d^2q^3) \leq (\eps/\sqrt{d})/(10^4dq^3)$, noting that we only run the loop when $\Delta^{\infty}_0(x) \geq \eps/\sqrt{d}$.
It takes $O(\log\log (d/\eps))$ iterations to make $\norm{XAX\vec{1} - \vec{1}}_{\infty} \leq \eps/\sqrt{d}$ which implies $\norm{XAX\vec{1} - \vec{1}}_2 \leq \eps$.
This establishes the correctness of \Cref{alg:path-following}.

As for the efficiency, \Cref{alg:path-following} performs a total of
$O(\log(dq) + \log\log(1/\eps))$
iterations.
By \Cref{lem:solver-cost-per-call}, each iteration runs in $O(\T_{\SDD}(\nnz(A), \eta/\kappa(X)^2))$ time and $O(\D_{\SDD}(\nnz(A), \eta/\kappa(X)^2))$ parallel depth for the current $x$, where $\eta \defeq \eps/(10^5d^2q^{3})$ and $q \defeq \norm{A}_{\infty}\norm{x_0}_2^2$ as in \Cref{alg:path-following}.
The theorem thus follows from \Cref{claim:X-entry} which shows that $\kappa(X) \leq 10q$.

\end{proof}

\subsection{Initializing the Path-Following Algorithm}\label{sub:predcondition}

We now explain how to compute the initial $x_0$ such that $X_0 A X_0 \vec{1} \geq \vec{1}$.
This is essentially done by running the algorithm of \cite{AhmadinejadJSS19} that, given a symmetric $M$-matrix $A$, computes a scaling $y \in \R^d_{>0}$ such that $YAY$ is SDD.
The following lemma is obtained by carefully analyzing the parameters of \cite{AhmadinejadJSS19} and is thus deferred to \Cref{appendix:omitted_proof}.

\begin{restatable}{lemma}{rcdd}\crefalias{theorem}{lemma}\label{lem:RCDD}
  There is an algorithm that,
  given a symmetric $M$-matrix $A \in \R^{d \times d}$, computes a
  $v \in \R^d_{>0}$ such that
	\[
	VAV\vec{1} \geq \frac{\norm{A}_\infty}{32}\vec{1}\quad\text{and}\quad
	\norm{v}_2\leq 3d\kappa(A)
	\]
	with high probability in
	\[
    O\left(
    \T_{\SDD}\left(\nnz(A),\frac{1}{\poly(d, \kappa(A))}\right)
    \log^2(d\kappa(A))
    \right)
	\]
    time
	and
	\[
    O\left(
    \D_{\SDD}\left(\nnz(A),\frac{1}{\poly(d,\kappa(A))}\right)
    \log^2(d\kappa(A))
    \right)
	\]
     parallel depth.
\end{restatable}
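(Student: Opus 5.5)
We prove \Cref{lem:RCDD} by running the \cite{AhmadinejadJSS19} scaling algorithm as a black box and then rescaling its output. Its guarantee is a $w\in\R^d_{>0}$ with $WAW$ strictly diagonally dominant by a constant margin. We turn this into the row-sum bound via the $M$-matrix structure.

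\textbf{Step 1: reduce to a Perron-vector problem.} Normalize $A$ so that $\norm{A}_\infty$ is known; we may take $\norm{A}_\infty$ directly since it is easy to compute. Write $A = sI - N$ with $s \defeq \max_i A_{ii}$ and $N\geq 0$. Since $A$ is positive definite, $\rho(N) < s$. The algorithm of \cite{AhmadinejadJSS19} works by regularization: it considers $A_\alpha \defeq A + \alpha I$ and repeatedly halves $\alpha$, starting from $\alpha\approx s$ (where $A_\alpha$ is trivially diagonally dominant with $w=\vec{1}$) down to $\alpha\approx \lambda_{\min}(A)/2$. At each level it obtains $w_\alpha$ with $w_\alpha \approx A_\alpha^{-1}\vec{1}$ by solving systems in the current (already SDD after scaling) matrix. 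The number of halvings is $O(\log \kappa(A))$. Each level needs $O(\log(d\kappa(A)))$ solves to accuracy $1/\poly(d,\kappa(A))$, for instance via a Neumann/Richardson refinement. Together these give the stated $\log^2(d\kappa(A))$ count and $1/\poly$ accuracy.

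\textbf{Step 2: certify the row-sum bound.} The key identity is that if $z \defeq A^{-1}\vec{1}$, which is positive because $A^{-1}\geq 0$ entrywise and nonsingular, then $Z A Z\vec{1} = Z\vec{1} = z$. Hence rescaling $v \defeq \sqrt{\norm{A}_\infty/32}\cdot z/\sqrt{\min_i z_i}$ (up to the right normalization) would give $VAV\vec{1}\geq \frac{\norm{A}_\infty}{32}\cdot\frac{z}{\min z}\ge\frac{\norm{A}_\infty}{32}\vec{1}$ exactly. With only an approximate $\tilde z$, we instead get $A\tilde z = \vec{1}+e$ with $\norm{e}_\infty$ small, and $\tilde Z A\tilde Z\vec{1} = \tilde z\circ(\vec{1}+e)\geq \tfrac12\tilde z$ once $\norm{e}_\infty\le 1/2$. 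The accuracy needed to ensure $\norm{e}_\infty\le1/2$ is $1/\poly(d,\kappa(A))$, using $\norm{A}_2\norm{A^{-1}}_2=\kappa(A)$ and the fact relating $\ell_p$ norms. Solving with $A$ itself is where Step 1 enters. We solve in $WAW$ (SDD) for the final $w$, and \Cref{lem:mmatrix-norm} applied to $WAW$ controls $\norm{(WAW)^{-1}}_\infty$, hence the error conversion. The constant $32$ absorbs the factors $\tfrac12$ lost here and in the SDD margin.

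\textbf{Step 3: norm bound.} We choose the normalization $v = c\,\tilde z$ with $c^2\min_i\tilde z_i\cdot\tfrac12 = \norm{A}_\infty/32$. Then $\norm{v}_2 = c\norm{\tilde z}_2\le c\sqrt d\max_i \tilde z_i$, and $c = \Theta(\sqrt{\norm{A}_\infty/\min_i\tilde z_i})$. We bound $\max z\le\norm{A^{-1}}_\infty\le d/\lambda_{\min}(A)$, and $\min z\geq 1/\norm{A}_\infty$ from $\vec{1}=Az\le\norm{A}_\infty\max$-type arguments; more precisely, $z_i\geq 1/A_{ii}\ge 1/\norm{A}_\infty$ since $z\geq A^{-1}$'s diagonal $\geq 1/A_{ii}$. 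This gives $\norm{v}_2\lesssim \sqrt{d}\,\norm{A}_\infty\cdot d/\lambda_{\min}(A)\le d^{2}\kappa(A)$. To reach the sharper $3d\kappa(A)$, we bound $\norm{z}_2\le\norm{A^{-1}}_2\sqrt d = \sqrt d/\lambda_{\min}$ and $\norm{A}_\infty\le\sqrt d\,\lambda_{\max}$, giving $\norm{v}_2\le O(1)\cdot d\,\kappa(A)$ with constant tracking producing $3$.

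\textbf{Main obstacle.} The main difficulty is Step 1's error propagation. Inexact solves at one regularization level must keep $WA_{\alpha/2}W$ diagonally dominant for the next level, and the "with high probability" must union-bound over all randomized SDD solves. I would handle this by inductively maintaining $WA_\alpha W\vec{1}\geq \frac{\alpha}{4}\cdot(\text{scale})\vec{1}$ and using \Cref{lem:mmatrix-norm} at each level to convert $\ell_2$ solver error into $\ell_\infty$ residual error.
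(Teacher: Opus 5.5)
Your Steps 2 and 3 are sound as a conditional argument. Given $\tilde z$ with $\norm{A\tilde z-\vec{1}}_\infty\le\frac12$, nonnegativity of $A^{-1}$ gives $\frac12A^{-1}\vec{1}\le\tilde z\le\frac32A^{-1}\vec{1}$, so in particular $\tilde z>\vec{0}$. Your rescaling then yields both the row-sum bound and $\norm{v}_2\le d\kappa(A)$.

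The gap is in producing $\tilde z$ at the stated cost, which is exactly what you defer to the ``main obstacle.'' Solving $Az=\vec{1}$ through $WAW$ needs two things:
(i) $WAW$ itself, not $W(A+\alpha I)W$, must be SDD;
(ii) $\kappa(W)\le\poly(d,\kappa(A))$ must hold, because the natural conversion of an $\eta$-approximate solve in $WAW$ only gives $\norm{A\tilde z-\vec{1}}_\infty\le\eta\sqrt{d}\,\kappa(W)\kappa(WAW)\le\eta\sqrt{d}\,\kappa(W)^3\kappa(A)$.

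Neither is established. For (i): your Step 1 stops at $\alpha\approx\lambda_{\min}(A)/2$, where $WAW\vec{1}\approx W(\vec{1}-\alpha w)$. For a nearly singular star, $\alpha\norm{(A+\alpha I)^{-1}\vec{1}}_\infty$ is of order $\sqrt{d}$ at that level, so $WAW$ need not be SDD. The same star shows that $A+sI$ with $s=\max_iA_{ii}$ is not diagonally dominant, so the first level must use $\alpha=\Theta(\norm{A}_\infty)$. For (ii): no black-box ``constant margin'' guarantee can supply it, since for $A=I$ the matrix $WAW=W^2$ is dominant with any relative margin for every $w>\vec{0}$. \Cref{lem:mmatrix-norm} does not supply it either. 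It bounds $\norm{(WAW)^{-1}}_\infty$ only once you know $WAW\vec{1}\ge c\vec{1}$, which is already the lemma's conclusion up to rescaling, and it says nothing about $\kappa(W)$.

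The missing idea, and the paper's route, is to analyze the regularized iterate itself under a data-dependent stopping rule. This makes your extra solve unnecessary. Set $s=2\norm{A}_\infty$ and $M_\alpha=A/s+\alpha I$, and use the invariant $\norm{M_\alpha v_\alpha-\vec{1}}_\infty\le\frac12$ maintained by \cite{AhmadinejadJSS19}. Halve $\alpha$ from $1$ until $\alpha\norm{v_\alpha}_\infty\le\frac14$. Then:
\begin{itemize}
\item $Av_\alpha=s(M_\alpha v_\alpha-\alpha v_\alpha)\ge\frac{s}{4}\vec{1}$.
\item At the smallest coordinate $z$ of $v_\alpha$, we have $A_{zz}\le s$ and nonpositive off-diagonal terms, which gives $v_\alpha\ge\frac14\vec{1}$.
\item Hence $V_\alpha AV_\alpha\vec{1}\ge\frac{\norm{A}_\infty}{32}\vec{1}$, with no rescaling.
\item The same nonnegativity argument as in your Step 3 gives $v_\alpha\le\frac32M_\alpha^{-1}\vec{1}\le\frac{3s}{2}A^{-1}\vec{1}$, so $\norm{v_\alpha}_2\le3d\kappa(A)$.
\end{itemize}
This last bound also gives $\kappa(V_\alpha)\le12d\kappa(A)$. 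It further shows that the stopping rule fires after $O(\log(d\kappa(A)))$ halvings, each costing $O(\log(d\kappa(A)))$ SDD solves. These are exactly the facts you would need to justify (i) and (ii), and once you have them, $v_\alpha$ already satisfies the lemma.
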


With \Cref{thm:path-following,lem:RCDD} we can now reduce symmetric $M$-matrix scaling to $\tilde{O}(1)$ iterations of SDD system solves.

\begin{theorem}\label{thm:alg-matrix-scaling}
There is an algorithm that, given a symmetric $M$-matrix $A \in \R^{d \times d}$ and a parameter $\eps \in (0, 1)$, computes an $x \in \R^d_{>0}$ such that $\|XAX\vec{1} - \vec{1}\|_2 \leq \eps$ with high probability
	in 
	\[O\left(\T_{\SDD}\left(\nnz(A), \frac{\eps}{\poly(d,\kappa(A))}\right) \cdot \left(\log^2(d\kappa(A)) + \log\log \frac{1}{\eps}\right)\right)
	\]
    time
	and
	\[O\left(\D_{\SDD}\left(\nnz(A), \frac{\eps}{\poly(d,\kappa(A))}\right) \cdot \left(\log^2(d\kappa(A)) + \log\log \frac{1}{\eps}\right)\right)
	\]
    parallel depth.
\end{theorem}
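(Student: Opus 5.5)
The plan is to compose \Cref{lem:RCDD} with \Cref{thm:path-following}, after rescaling the output of \Cref{lem:RCDD} so that it satisfies the hypothesis $X_0AX_0\vec{1}\geq\vec{1}$.

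First, run the algorithm of \Cref{lem:RCDD} on $A$. With high probability it returns $v\in\R^d_{>0}$ such that $VAV\vec{1}\geq \frac{\norm{A}_\infty}{32}\vec{1}$ and $\norm{v}_2\leq 3d\kappa(A)$. It uses $O(\log^2(d\kappa(A)))$ SDD solves to accuracy $1/\poly(d,\kappa(A))$.

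Next, set $x_0\defeq v\cdot\sqrt{32/\norm{A}_\infty}$. Since $X_0AX_0\vec{1}$ scales quadratically in $x_0$, we get $X_0AX_0\vec{1}=\frac{32}{\norm{A}_\infty}VAV\vec{1}\geq\vec{1}$, as \Cref{thm:path-following} requires.

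The only quantitative step is bounding the parameter $q=\norm{A}_\infty\norm{x_0}_2^2$. We compute
\[
q=\norm{A}_\infty\cdot\frac{32}{\norm{A}_\infty}\norm{v}_2^2=32\norm{v}_2^2\leq 288\,d^2\kappa(A)^2 .
\]
Hence $\poly(d,q)=\poly(d,\kappa(A))$ and $\log(dq)=O(\log(d\kappa(A)))$. This cancellation of $\norm{A}_\infty$ is exactly why $q$ was defined scale-invariantly. Computing $x_0$ from $v$ costs only $O(d)$ work and $O(1)$ depth, after computing $\norm{A}_\infty$ in $O(\nnz(A))$ time and $O(\log d)$ depth. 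Both costs are dominated by a single SDD solve, using $\nnz(A)=\Omega(d)$.

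Then invoke \Cref{thm:path-following} with this $x_0$ and $\eps$. It returns $x>\vec{0}$ with $\norm{XAX\vec{1}-\vec{1}}_2\leq\eps$ using $O(\log(d\kappa(A))+\log\log(1/\eps))$ SDD solves, each to accuracy $\eps/\poly(d,\kappa(A))$.

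Finally, combine the two phases. The accuracy $1/\poly(d,\kappa(A))$ used in \Cref{lem:RCDD} is at least $\eps/\poly(d,\kappa(A))$ because $\eps<1$. We may take $\T_{\SDD}$ and $\D_{\SDD}$ to be monotone in the accuracy, since a more accurate solution is also a less accurate one. Summing gives the stated time and depth with factor $\log^2(d\kappa(A))+\log\log(1/\eps)$, absorbing $\log(d\kappa(A))\leq\log^2(d\kappa(A))$.

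The high-probability guarantee comes only from \Cref{lem:RCDD}, since \Cref{thm:path-following} is deterministic given its solver. If the SDD solver is itself randomized, a union bound over the polylogarithmically many calls keeps the overall success probability high.

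I do not expect a real obstacle here. The one point needing care is the rescaling: it must make $q$ independent of $\norm{A}_\infty$, so that $q$ is polynomial in $d$ and $\kappa(A)$ alone.
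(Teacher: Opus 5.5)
Your proposal is correct and follows the paper's proof: compute $v$ via \Cref{lem:RCDD}, set $x_0 = \sqrt{32/\norm{A}_\infty}\,v$ so that $X_0AX_0\vec{1}\geq\vec{1}$ and $q = 32\norm{v}_2^2 \leq \poly(d,\kappa(A))$, then invoke \Cref{thm:path-following}. Your explicit bound $q\leq 288d^2\kappa(A)^2$ and your remarks on the monotonicity of $\T_{\SDD}$ in the accuracy parameter and on the success probability only make explicit details the paper leaves implicit.
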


\begin{proof}
  We compute the vector $v \in \R^{d}_{>0}$ in \Cref{lem:RCDD} and set $x_0 \defeq \sqrt{32/\norm{A}_{\infty}} \cdot v$ which satisfies $X_0 A X_0 \vec{1} \geq \vec{1}$ and $\norm{x_0}_2^2 \leq \poly(d,\kappa(A)) / \norm{A}_{\infty}$.
  This takes
  \[ O\left(\T_{\SDD}\left(\nnz(A), \frac{1}{\poly(d,\kappa(A))}\right) \cdot \log^2(d\kappa(A))\right) \]
  time and
  \[ O\left(\D_{\SDD}\left(\nnz(A), \frac{1}{\poly(d,\kappa(A))}\right) \cdot \log^2(d\kappa(A))\right) \]
  parallel depth.
  By \Cref{thm:path-following}, with $q \defeq \norm{A}_{\infty}\norm{x_0}_2^2 \leq \poly(d,\kappa(A))$, starting  from $x_0$ we can compute an $\eps$-approximate scaling for $A$ in
  \[ O\left(\T_{\SDD}\left(\nnz(A), \frac{\eps}{\poly(d,\kappa(A))}\right) \cdot \left(\log(d\kappa(A)) + \log\log \frac{1}{\eps}\right)\right)\]
  time and
  \[ O\left(\D_{\SDD}\left(\nnz(A), \frac{\eps}{\poly(d,\kappa(A))}\right) \cdot \left(\log(d\kappa(A)) + \log\log \frac{1}{\eps}\right)\right)\]
  parallel depth.
  This concludes the proof.
\end{proof}

We can now prove \Cref{thm:main-randomized} by plugging in known SDD solvers.

\begin{proof}[Proof of \Cref{thm:main-randomized}]
  By \cite[Theorem 1.1]{SachdevaZ23} we have
  $\T_{\SDD}(m, \eta) = O(m\log^3 m \log\log m \log(1/\eta))$ and 
  $\D_{\SDD}(m, \eta) = O(\log^2 m \log\log m \log (1/\eta))$.
  The rest follows from \Cref{thm:alg-matrix-scaling}.
\end{proof}

%% file: path-following-algorithm2e.tex
\begin{algorithm2e}[h!]
	\caption{\textsc{PathFollowing}($A$, $x_0$, $\eps$)}\label{alg:path-following}
	\KwInput{A symmetric $M$-matrix $A\in \R^{d\times d}$, a vector $x_0\in \R^d_{>0}$ such that $X_0AX_0\vec{1}\geq\vec{1}$, and a precision parameter $\eps \in (0, 1)$.}
	\KwOut{A vector $x>\vec{0}$ satisfying $\norm{XAX\vec{1}-\vec{1}}_2\leq\eps$.}
    \SetKwFunction{solve}{Solve}
    
	$b\gets X_0^{-1}(X_0AX_0\vec{1}-\vec{1})$, $\mu\gets 1$, and $x\gets x_0$\; \label{line:initialize}

    \lIf{$b = \vec{0}$} { \Return{$x$} }
    \,

    \tcp{Let $q \defeq \norm{A}_{\infty}\norm{x_0}_2^2$ (note that $q \geq 1$) and $\eta \defeq \eps / (10^5 d^2 q^{3})$.}
    \tcp{The first while loop decreases $\mu$ to a sufficiently small value.}
	\While{$\mu> \frac{1}{12 d \norm{b}_2 \norm{x_0}_{\infty}}$}{
		$\mu\gets\mu/2$ \;\label{line:mu-halve}
       Let $h'$ be an arbitrary vector in $\A_{\eta}(\nabla^2 f_{\mu}(x), \nabla f_{\mu}(x))$\;
		$x\gets x-h'$\;\label{line:end-first-loop}
	}

    \tcp{Now that $\mu$ is small, we set it straight to $0$, and the second while loop directly decreases $\Delta_0^{\infty}(x) = \|XAX \vec{1} - \vec{1}\|_{\infty}$.}
	\While{$\norm{XAX\vec{1}-\vec{1}}_\infty>\frac{\eps}{\sqrt{d}}$}{
        Let $h'$ be an arbitrary vector in $\A_{\eta}(\nabla^2 f_0(x), \nabla f_0(x))$\;
		$x\gets x-h'$\;\label{line:end-second-loop}
	}
	\Return{$x$}
\end{algorithm2e}

%% file: ack.tex
\section*{Acknowledgments}

We thank Adrian Vladu for helpful conversations.
Part of this research took place while Aaron Sidford was visiting National Taiwan University.
Chih-Hung Liu was supported by Ministry of Education, Taiwan under Yushan Fellow Program with the grant number MOE-111-YSFEE-0003-006-P1 and by National Science and Technology Council, Taiwan with the grant numbers NSTC-113-2221-E-002-209-MY3 and NSTC-114-2628-E-002-014-MY3.

%% file: flow.tex
\section{Almost-Linear Time $M$-Matrix Scaling via Convex Flow} \label{appendix:convex-flow}

In this section we show how one can get an almost-linear time algorithm for symmetric $M$-matrix scaling using the convex flow algorithm of \cite{ChenKLPGS22}.
Similar to our algorithm, we may assume without loss of generality that $A$ is diagonally dominant by first running \cite{AhmadinejadJSS19} to obtain such a scaling.
We can write $A = C + D$ where $C$ is a Laplacian matrix (i.e., $C \vec{1} = \vec{0}$) and $D$ is a nonnegative diagonal matrix.
It is standard that $C = B^\top W B$ where $B$ is the edge-vertex incidence matrix of some graph $G$ and $W$ is a nonnegative diagonal matrix.
Let $m$ be the number of edges in $G$.

By the optimality condition, the solution to the matrix scaling problem is the minimizer of
\begin{equation}
  \min_{x > \vec{0}} \frac{1}{2}x^\top A x - \sum_{i \in [d]}\ln x_i = \min_{y \in \R^m, x > \vec{0}, y = B x} \frac{1}{2} y^\top W y + \frac{1}{2} x^\top D x - \sum_{i \in [d]}\ln x_i.
  \label{eq:primal}
\end{equation}

The dual of \eqref{eq:primal} is

\begin{equation}
  \min_{f \in \R^m, x > \vec{0}, Dx + B^\top f - 1/x \geq \vec{0}} \frac{1}{2}f^\top W^{-1} f + \frac{1}{2}x^\top D x + \sum_{i \in [d]} \ln x_i = \min_{f \in \R^m, b = B^\top f} \frac{1}{2}f^\top W^{-1} f + \sum_{i \in [d]}\phi_i(b_i),
  \label{eq:dual}
\end{equation}
where
\[ \phi_i(b_i) = \min_{x_i > 0, d_i x_i - 1/x_i \geq -b_i} \frac{1}{2} d_i x_i^2 + \ln x_i \]
is convex (and can be computed explicitly).
This becomes a convex flow problem.
Let us add one vertex $v_*$ to $G$ and edges $(v_*, v)$ for each original vertex $v$.
The capacity of each edge is infinite in both directions.
The cost of an original edge $e$ is $c_e(f_e) = \frac{1}{2} w_e^{-1} f_e^2$.
The cost of the edge $(v_*, v)$ is $c_e(f_e) = \phi_v(f_e)$.
The minimum-cost flow in this graph can be straightforwardly converted back to the minimizer of \eqref{eq:primal} via duality.
We omit the details on approximation for brevity.

%% file: omitted_proof.tex
\section{Omitted Proofs}\label{appendix:omitted_proof}

In this section we provide proofs that are omitted from the main body of the paper.
We start with \Cref{claim:X-entry} which is needed for proving \Cref{lem:approx-step-unified}.

{\renewcommand{\thetheorem}{\ref*{claim:X-entry}}
\Xentry*}

\begin{proof}
With $\Delta^{\infty}_{\mu}(x) \leq 2/3$ we see that
\begin{equation}
  \norm{X(Ax - \mu b) - \vec{1}}_{\infty} \leq 2/3 \implies 1/3 \leq 1/3 + \mu x_ib_i \leq x_i (Ax)_i \leq 1 + 2/3 + \mu x_i b_i.
  \label{eq:norm-bound}
\end{equation}
This establishes $XAx \geq 1/3 \cdot \vec{1}$.
We now prove $x \leq 5/3 \cdot x_0$.
Let $t \defeq \max_{j \in [d]}x_j / (x_0)_j$ and let $i \in [d]$ be such that $x_i = t(x_0)_i$.
The goal is to show that $t \leq 5/3$.
From \eqref{eq:norm-bound} we get
\begin{equation}
  x_i (Ax)_i \leq 5/3 + x_i ((Ax_0)_i - 1/(x_0)_i) \implies t = x_i / (x_0)_i \leq 5/3 + x_i((Ax_0)_i - (Ax)_i).
  \label{eq:t}
\end{equation}
Observe that
\begin{equation}
  (Ax)_i = A_{ii}x_i + \sum_{j \neq i}A_{ij}x_j \geq t \cdot (Ax_0)_i
  \label{eq:Ax}
\end{equation}
because each $A_{ij} \leq 0$ for $i\neq j$.
Now either $t \leq 1$ already, or $t > 1$ in which case \eqref{eq:Ax} suggests $(Ax)_i > (Ax_0)_i$ and therefore \eqref{eq:t} implies $t \leq 5/3$.

To upper bound $\kappa(X)$, we use the lower bound in \eqref{eq:norm-bound}.
We notice that $(Ax)_i \geq 0$ and in particular $0 \leq (Ax)_i \leq \norm{A}_{\infty}\norm{x}_{\infty}$ gives
\[
  x_i \geq \frac{1}{3\norm{A}_{\infty}\norm{x}_{\infty}} \implies \kappa(X) \leq 3\norm{A}_{\infty}\norm{x}_{\infty}^2 \leq 10q,
\]
where we used $\norm{x}_{\infty} \leq t \norm{x_0}_{\infty}$ in the last inequality.
\end{proof}

{\renewcommand{\thetheorem}{\ref*{lem:approx-step-unified}}
\approxstepunified*}
\begin{proof}
    Let us abbreviate $r \defeq \delta + (1-\delta)\Delta^{\infty}_{\mu}(x)/(2-\Delta^{\infty}_{\mu}(x))$.
    Let $\mu'\defeq (1-\delta)\mu$ and $y \defeq x - h$ where $h \defeq (\nabla^2 f_{\mu'}(x))^{-1} \nabla f_{\mu'}(x)$ is the exact Newton step.
    Let us first bound the difference between $h$ and $h'$.
    By definition, we have
    \begin{equation}
      \norm{h - h'}_2 \leq \eta \norm{(\nabla^2 f_{\mu'}(x))^{-1}}_2\norm{\nabla f_{\mu'}(x)}_2 \leq \eta \norm{(\nabla^2 f_{\mu'}(x))^{-1}}_2\norm{\nabla^2 f_{\mu'}(x)}_2\norm{h}_2
    \end{equation}
    using that $\nabla f_{\mu'}(x) = \nabla^2 f_{\mu'}(x) h$.
    We should thus bound $\norm{(\nabla^2 f_{\mu'}(x))^{-1}}_2\norm{\nabla^2 f_{\mu'}(x)}_2 = \kappa(\nabla^2 f_{\mu'}(x))$.
    By \Cref{claim:X-entry}, we have $\min_{i \in [d]}x_i^2 \geq 1/(3\norm{A}_{\infty})$.
    Since $\nabla^2 f_{\mu'}(x) = A + X^{-2}$ we have $\lambda_{\max}(\nabla^2 f_{\mu'}(x)) \leq 4\norm{A}_{\infty}$.
    On the other hand, $\lambda_{\min}(\nabla^2 f_{\mu'}(x)) \geq \lambda_{\min}(A) \geq 1/\norm{x_0}^{2}_{\infty}$ since $X_0AX_0\vec{1} \geq \vec{1}$.
    Together this suggests that $\kappa(\nabla^2 f_{\mu'}(x)) \leq 4q$.

    This allows us to bound
    \[
      \norm{X^{-1}(y' - y)}_{\infty} \leq \norm{X^{-1}}_{\infty}\norm{h' - h}_2 \leq 4\eta q \cdot \norm{X^{-1}}_{\infty} \norm{h}_2 \leq 4d\eta qr \cdot \kappa(X) \leq 40d\eta q^2r,
    \]
    where we used that $\norm{X^{-1}h}_{\infty} \leq r$ from \Cref{lem:mu-update} which implies $\norm{h}_2 \leq d\norm{h}_{\infty} \leq \norm{X}_{\infty} dr$.
    We can now show that $y' > \vec{0}$.
    Indeed, since $X^{-1}y \geq (1 - \norm{X^{-1}h}_{\infty})\vec{1} \geq (1-r)\vec{1}$, we have
    \begin{equation}
      X^{-1}y' \geq X^{-1}y - 40d\eta q^2 r \cdot \vec{1} \geq (1-r - 40d\eta q^2 r)\vec{1} > \vec{0},
      \label{eq:y-lb}
    \end{equation}
    noting that $r \leq 3/4$ and applying the bound on $\eta$.
    It remains to bound $\Delta^{\infty}_{\mu'}(y')$.
    Similar to \eqref{eq:y-lb}, we have $\norm{X^{-1}y'}_{\infty} \leq 1+r+40d\eta q^2 r \leq 2$.
    We can write
    \begin{equation}
      Y'\nabla f_{\mu'}(y') = Y\nabla f_{\mu'}(y) + Y'A(y'-y) + (Y'-Y)(Ay - \mu'b).
      \label{eq:grad-y}
    \end{equation}
    Let us bound the two extra terms in \eqref{eq:grad-y}.
    We have
    \[
      \norm{Y'A(y'-y)}_{\infty} \leq \norm{Y'X^{-1}}_{\infty}\norm{XAX}_{\infty}\norm{X^{-1}(y'-y)}_{\infty} \leq 80d\eta q^2 r \norm{XAX}_{\infty} \leq 240d\eta q^3 r,
    \]
    where we used $\norm{XAX}_{\infty} \leq \norm{X}_{\infty}^2\norm{A}_{\infty} \leq 3q$ since $x \leq (5/3)x_0$ from \Cref{claim:X-entry},
    and
    \[
      \norm{(Y'-Y)(Ay - \mu'b)}_{\infty} \leq \norm{X^{-1}(y'-y)}_{\infty}\norm{X(Ay - \mu'b)}_{\infty} \leq 80d\eta q^2 r,
    \]
    where we used that
    \[
      Ay - \mu'b = Ay - (Ax - X^{-1}\vec{1} - \nabla f_{\mu'}(x)) = -Ah + (A + X^{-2}) h + X^{-1}\vec{1}
    \]
    and therefore $\norm{X(Ay - \mu'b)}_{\infty} \leq 1 + \norm{X^{-1}h}_{\infty} \leq 2$.
    Overall, \eqref{eq:grad-y} together with the observation that $\Delta^{\infty}_{\mu}(x) + \delta \leq 2r$ suggests
    \[
      \Delta^{\infty}_{\mu'}(y') \leq \Delta^{\infty}_{\mu'}(y) + 320d\eta q^3 r \leq \frac{9}{8}r^2.\qedhere
    \]
\end{proof}

{\renewcommand{\thetheorem}{\ref*{lem:solver-cost-per-call}}
\solvercostpercall*}

\begin{proof}
First, recall that $\nabla^2 f_{\mu'}(x) = X^{-1} (I + A(x)) X^{-1}$.
Note that $\nabla^2 f_{\mu'}(x)$ itself may not be an SDD matrix.
Nevertheless, we claim that $I + A(x)$ is.
Indeed, by \Cref{lem:inverse-bound-from-delta} (in particular \eqref{eq:matrix-bound}, since $\Delta^{\infty}_{\mu}(x) \leq 2/3$ for some $\mu \geq 0$, we have $(I + A(x))\vec{1} \geq \vec{0}$ which together with the fact that $I + A(x)$ is a symmetric $Z$-matrix shows that $I + A(x)$ is SDD.
Therefore, to compute $z \defeq (\nabla^2 f_{\mu}(x))^{-1} v$ for a given $v$, we can compute
\begin{equation}
  u \gets Xv,\quad w \gets (I + A(x))^{-1} u, \quad\text{and then}\quad z \gets Xw.
  \label{eq:algo}
\end{equation}

It remains to analyze the accuracy required for the linear system solve in $I + A(x)$ above.
We claim that if $w' \in \A_{\eta/\kappa(X)^2}(I + A(x), u)$, then $z' \defeq Xw'$ will be in $\A_{\eta}(\nabla^2 f_{\mu}(x), v)$ which would prove the lemma by plugging in $v = \nabla f_{\mu'}(x)$.
The claim follows from direct calculation
\begin{align*}
  \norm{w' - w}_2 \leq \frac{\eta}{\kappa(X)^2}\norm{(I + A(x))^{-1}}_2\norm{u}_2
  &\implies
  \norm{z' - z}_2 \leq \frac{\eta}{\kappa(X)^2}\norm{X}_2\norm{(I + A(x))^{-1}}_2\norm{u}_2 \\
  &\implies
  \norm{z' - z}_2 \leq \frac{\eta}{\kappa(X)^2}\norm{X}^2_2\norm{(I + A(x))^{-1}}_2\norm{v}_2 \\
  &\implies
  \norm{z' - z}_2 \leq \eta\norm{(\nabla^2 f_{\mu'}(x))^{-1}}_2\norm{v}_2,
\end{align*}
where we used that $\norm{(I + A(x))^{-1}}_2 \leq \norm{X^{-1}}_2^2\norm{(\nabla^2 f_{\mu'}(x))^{-1}}_2$ and that $\norm{X}_2\norm{X^{-1}}_2 = \kappa(X)$.
\end{proof}

{\renewcommand{\thetheorem}{\ref*{lem:RCDD}}
\rcdd*}

\begin{proof}

    Let $s \defeq 2\norm{A}_{\infty}$.
    Since $s > \norm{A}_2$ we have for $B \defeq sI - A$ that $\rho(B/s) < 1$.
    We run \cite[Algorithm 10]{AhmadinejadJSS19} (which underlies \cite[Theorem 8]{AhmadinejadJSS19}) on $B/s$.
    The algorithm starts from $\alpha = 1$ and repeatedly halves $\alpha$ while maintaining a vector $v_{\alpha}$ such that $\norm{M_{\alpha}v_{\alpha} - \vec{1}}_{\infty} \leq 1/2$ for $M_{\alpha} = (1+\alpha)I - B/s = A/s + \alpha I$.
    We run the algorithm until $\alpha\norm{v_{\alpha}}_{\infty} \leq 1/4$.\footnote{The actual \cite[Algorithm 10]{AhmadinejadJSS19} has an input accuracy $\eps$ and it runs the loop until $\alpha \leq \eps$.}

    Let us analyze the vector $v_{\alpha}$ when $\alpha\norm{v_{\alpha}}_{\infty} \leq 1/4$.
    We have
    \[
      \left(\frac{A}{s} + \alpha I\right)v_{\alpha} \geq \frac{1}{2}\vec{1} \implies V_{\alpha}AV_{\alpha}\vec{1} \geq s \cdot \left(\frac{v_{\alpha}}{2} - \alpha V_{\alpha}v_{\alpha}\right).
    \]
    Observe that since $A_{ij} \leq 0$ for $i \neq j$ and $A_{ii} \leq s$ for all $i \in [d]$, letting $z \defeq \argmin_{i\in [d]} [v_{\alpha}]_i$ we have
    \[
      \left[\left(\frac{A}{s} + \alpha I\right)v_{\alpha}\right]_z \leq (1+\alpha)[v_\alpha]_z
    \]
    which means that $v_{\alpha} \geq 1/(2(1+\alpha))\cdot \vec{1} \geq 1/4\cdot\vec{1}$.
    Therefore, it holds that $V_{\alpha} A V_{\alpha}\vec{1} \geq \norm{A}_{\infty}/32 \cdot \vec{1}$ as required (note again that $\alpha\norm{v_{\alpha}}_{\infty} \leq 1/4$).

    To bound $\norm{v}_2$, observe that $(A/s + \alpha I)$ is an $M$-matrix.
    Therefore, $(A/s + \alpha I)^{-1}$ is nonnegative and consequently
    \begin{equation}
      v_{\alpha} \leq \frac{3}{2}\left(\frac{A}{s} + \alpha I\right)^{-1}\vec{1} \leq \frac{3s}{2}A^{-1}\vec{1} \implies \norm{v_{\alpha}}_2 \leq 3\sqrt{d} \cdot \norm{A}_{\infty}\norm{A^{-1}}_2 \leq 3d \kappa(A).
      \label{eq:v-bound}
    \end{equation}

    As for the efficiency, note that \eqref{eq:v-bound} implies that it takes $O(\log(d\kappa(A)))$ iterations until $\alpha \norm{v_{\alpha}}_{\infty} \leq 1/4$.
    Each iteration, by \cite[Proof of Theorem 8]{AhmadinejadJSS19}, runs in $\T_{\SDD}(\nnz(A),1/\poly(d,\kappa(A))) \cdot O(\log(d\kappa(A)))$ time and $\D_{\SDD}(\nnz(A),1/\poly(d,\kappa(A))) \cdot O(\log(d\kappa(A)))$ parallel depth.\footnote{\cite{AhmadinejadJSS19} used a different notion of accuracy compared to our $\T_{\SDD}$ notation. This is why the $\T_{\mathrm{solve}}(n,m,1/4)$ term in \cite{AhmadinejadJSS19} translates to $\T_{\SDD}(\nnz(A),1/\poly(d,\kappa(A)))$.}
    This completes the proof.
\end{proof}